\newcommand{\be}{\begin{equation}}
\newcommand{\ee}{\end{equation}}
\newcommand{\bd}{\begin{displaymath}}
\newcommand{\ed}{\end{displaymath}}
\newcommand{\bea}{\begin{eqnarray}}
\newcommand{\eea}{\end{eqnarray}}
\newcommand{\R}{\mathbb{R}}
\newcommand{\C}{\mathbb{C}}
\newcommand{\Z}{\mathbb{Z}}
\newcommand{\Cs}{\mathbb{C}_{\sigma_0}}
\newcommand{\w}{\omega}
\newcommand{\so}{\sigma_0}
\newcommand{\dwz}{\Delta\w_{zk}}
\newcommand{\dwp}{\Delta\w_{pi}}
\newcommand{\dsz}{\Delta \sigma_{zk}}
\newcommand{\dsp}{\Delta \sigma_{pi}}
\begin{document}

\markboth{S. Gumussoy}{Relative Stability Analysis of Closed-Loop SISO Dead-Time Systems}

\articletype{}

\title{Relative Stability Analysis of Closed-Loop SISO Dead-Time Systems: Non-imaginary Axis Case}

\author{Suat Gumussoy$\dagger$$^{\ast}$\thanks{$^\ast$Corresponding author. Email: suat.gumussoy@cs.kuleuven.be
\vspace{6pt}}\\\vspace{6pt}{$\dagger$\em{Department of Computer Science, K. U. Leuven, \\
        Celestijnenlaan 200A, 3001, Heverlee, Belgium}};
\\\vspace{6pt}\received{\today} }

\maketitle
\begin{abstract}
We present a numerical method to analyze the relative stability of closed-loop single-input-single-output (SISO) dead-time systems on a given left complex half-plane for all positive delays. The well-known boundary crossing method for the imaginary axis is extended to a given vertical line stability boundary in the complex plane for these type of systems. The method allows to compute the characteristic roots crossing the relative stability boundary and their corresponding delays up to a maximum predefined delay. Based on this method, we analyze the relative stability of the closed-loop system for all positive delays. Both numerical methods are effective for high-order SISO dead-time systems.

\begin{keywords}
SISO dead-time systems, relative stability analysis, relative stability boundary, time delay.
\end{keywords}
\end{abstract}

\section{Introduction}
The stability of time-delay systems with commensurate delays is analyzed with respect to the time-delay using different approaches in the literature, \cite{NiculescuBook, GuBook, WimBook}. Many approaches are based on the continuity of characteristics roots with respect to the time-delay for retarded time-delay systems and neutral time-delay systems with finitely many characteristics roots on the right complex half-plane including the imaginary axis, \cite{WimBook}. These approaches are based on the computation of characteristic roots crossing the stability boundary, the imaginary axis. Various methods are available for this computation such as two variable-based approaches \cite{GuBook, OlgacTAC02, OlgacTAC04}; matrix pencil approaches \cite{ChenSCL1995, FuTAC06}; geometric approaches \cite{Gu05}. The stability of time-delay systems with respect to the time-delay is analysed using the information on the computed boundary crossing roots.

There are other methods for robust stability analysis of time-delay systems in parameter and delay-parameter spaces,  \cite{KharitonovSurvey98, RichardSurvey03, SipahiSICON09, FazeliniaTAC2007, SipahiDSMC09, JarlebringCAM09} (see references therein).

In practical implementations, it is desirable to consider the \emph{relative stability boundary}, a \emph{given} vertical line with a negative real part and parallel to the imaginary axis, \cite{Bourles87, Niculescu94}. A system is \emph{relative stable} if all characteristic roots of the system lie left side of the relative stability boundary. The distance of the relative stability boundary from the imaginary axis is considered as a stability margin and determines the settling time of the system, \cite{Filipovic2002}. This guarantees that the system is stable against parameter uncertainties and disturbances and satisfies the settling time requirements.

One approach to check the relative stability of the time-delay systems with \emph{constant fixed delays} is to compute the right-most characteristic root by approximating the spectrum of the time-delay system using spectral methods, \cite{EngelBorghs2000, EngelBorghs2002, BredaSISC2005}. The real part of the right-most characteristic root is computed by solving two nonlinear equations, from magnitude and phase conditions, by a constrained numerical search in \cite{Filipovic2002}. Then the relative stability of the system is determined by the location of its right-most characteristic root in the complex plane. Note that the relative stability analysis for a delay interval by these approaches requires the computation of characteristic roots for each delay value of a discretized set of the delay interval. This is numerically expensive and the fast dynamics of a characteristic root between two consecutive delay values may not be detected.

We propose a numerical method to analyze the \emph{relative stability} of closed-loop SISO dead-time systems with respect to its \emph{delay}. To the author's best knowledge, this is the first stability analysis method for time-delay systems on the stability boundary different than the imaginary axis. SISO dead-time systems are frequently encountered in mechanical systems and chemical process applications such as the feedback control a delayed resonator, \cite{Filipovic2002} and a continuous stirred tank reactor, \cite{Huang96}; in biosciences applications such as the human respiration model and the population periodic cycles, \cite{WimBook}. Therefore the relative stability analysis of closed-loop SISO dead-time systems is important for practical applications.

This paper presents two main contributions. First is computation of delay intervals and the number of characteristic roots to the right of a given relative stability boundary on these intervals for a given upper bound delay. The second contribution is the relative stability analysis of closed-loop SISO dead-time systems for \emph{all delays}. It is shown that this analysis requires the computation of finitely many boundary crossing roots because the crossing direction of asymptotic characteristic roots is invariant. Then the relative stability analysis follows from the number of characteristic roots on each delay interval similar to the approach in \cite{OlgacTAC02, OlgacTAC04}.

The proposed method is numerically exact and detects all characteristic roots crossing the given relative stability boundary for the given upper bound delay. The effectiveness of the relative stability analysis is tested on various high-order SISO dead-time systems and some benchmark results are given.

The rest of this paper is organized as follows. Section \ref{sec:problem} formulates the considered problems. In Section~\ref{sec:intervals}, we compute the intervals on the relative stability boundary such that the time-delay values of the boundary crossing roots are non-negative and their crossing directions are invariant. Two main contributions are given in Section \ref{sec:mainres}. Section~\ref{sec:examples} is devoted to
the numerical examples. In Section~\ref{sec:concl} some concluding remarks are presented.

\textbf{Notation:} \\
The notations in this paper are standard and given below.
\begin{tabbing}
  \= $\C, \R, \Z \quad\quad $ \=: sets of complex, real and integer numbers, \\
  \> $\Re(u), \Im(u)$ \>: real and imaginary parts of a complex number $u$, \\
  \> $|u|, \angle u$ \>: the magnitude and phase of a complex number $u$, \\
  \> $\lfloor u \rfloor, \lceil u \rceil$ \>: the next smallest and largest integers close to a real number $u$, \\
  \> $\textrm{sign}(u)$ \>: returns $+1, -1, 0$ given a real number $u$ for $u>0, u<0, u=0$ respectively, \\
  \> $\frac{\partial f}{\partial h}, \frac{\partial f}{\partial s}$ \>: partial derivatives of the function $f(s,h)$ with respect to its parameters $h$ and $s$.
\end{tabbing}

\section{Problem Formulation} \label{sec:problem}

The transfer function of the closed-loop SISO dead-time system is
\be \label{eq:CL}
G_{cl}(s,h):=(1+G(s)e^{-hs})^{-1},\; h>0,\;h\in\R.
\ee The plant $G$ is a proper SISO system and it has transfer function representation
\be \label{eq:tfG}
G(s)=k_G\frac{\prod_{k=1}^m s-(\sigma_{zk}+j\omega_{zk})}
{\prod_{i=1}^n s-(\sigma_{pi}+j\omega_{pi})}
\ee where $k_G\in\R$, $\sigma_{zk}+j\omega_{zk}$ $k=1,\ldots,m$, $\sigma_{pi}+j\omega_{pi}$ $i=1,\ldots,n$ are the system gain, zeros and poles of $G$ respectively. Note that a real zero or pole of the transfer function (\ref{eq:tfG}) is represented by the real part equal to the value of the system zero or pole and the imaginary part  equal to zero and the complex zeros and poles appear in complex-conjugate pairs.

The \emph{relative stability boundary} is the \emph{given} vertical line parallel to the imaginary axis, $\Re(s)=\sigma_0<0$,  $\sigma_0\in\R$ and its value depends on the design requirements.

We define the \emph{complementary stability region} $\Cs$ as
\be \label{eq:Cs}
\Cs =\left\{s\in \mathbb{C}: \Re(s)\geq\sigma_0\right\}.
\ee The closed-loop system $G_{cl}$ is stable on the complex left half-plane, $\R(s)<\sigma_0$ when there is no characteristic root inside $\Cs$.

The \emph{crossing direction} of a characteristic root $s$ crossing $\Re(s)=\so$ at the time-delay $h$ determines whether a characteristic root enters into or leaves the region $\Cs$, \cite{WimBook} and it is defined as

\be \label{eq:RT}
\mathcal{CD}(s,h):=\textrm{sign}\left(\Re{\left(\left.-\left(\frac{\partial f}{\partial h}\right)\left(\frac{\partial f}{\partial s}\right)^{-1}\right|_{f(s,h)=0}\right)}\right)\ \textrm{where}\ f(s,h)=1+G(s)e^{-hs}.
\ee

We consider the following two problems for a closed-loop SISO dead-time system $G_{cl}$:
\begin{enumerate}
  \item Compute delay intervals and the number of characteristic roots of $G_{cl}$ inside $\Cs$ at each delay interval for $h\in[0,h_{\max}]$,
  \item Find delay intervals where $G_{cl}$ is stable on the region $\Re(s)<\so$ for $h\in[0,\infty)$.
\end{enumerate}

Both proposed methods are based on computing \emph{critical delays} in an increasing order when characteristic roots of $G_{cl}$ cross the relative stability boundary $\R(s)=\so$. This computation requires finding the boundary crossing roots $s$ and their critical delays $h$ satisfying
\be \label{eq:complex}
1+G(s)e^{-h s}=0\quad\rm{and}\quad \Re(s)=\so.
\ee

The relative stability boundary can be mapped to the imaginary axis by $\tilde{s}=s-\so$ and the problem is transformed into
\be
1+G(\tilde{s}+\so)e^{-h\tilde{s}} e^{-h \so}=0\quad\rm{and}\quad \Re(\tilde{s})=0.
\ee

This problem is difficult compared to the standard problem on the imaginary axis where $\so=0$. The extra term $e^{-h\so}$ does not allow to use standard delay elimination techniques and to reduce the problem into a finite dimensional problem, \cite{WimBook}. We propose a new method to compute boundary crossing roots, their critical delays on the relative stability boundary and this method allows us to analyze the relative stability of closed-loop SISO dead-time systems.

In Section \ref{sec:intervals}, we find \emph{feasible intervals with invariant crossing direction} on the relative stability boundary where time-delays in (\ref{eq:complex}) are non-negative for all points inside these intervals and their crossing directions are invariant. Then we compute boundary crossing roots and their critical delays on these intervals in Section~\ref{sec:mainres}.

\section{Computation of Feasible Intervals with Invariant Crossing Direction on the Relative Stability Boundary} \label{sec:intervals}

We compute \emph{feasible intervals} and \emph{intervals with invariant crossing direction} on the relative stability boundary in Sections \ref{sec:whintvals} and \ref{sec:w1intvals} respectively. Feasible intervals with invariant crossing direction follow from the intersection of two sets of intervals.

\subsection{Computation of Feasible Intervals} \label{sec:whintvals}
By the magnitude equation of (\ref{eq:complex}), a boundary crossing root of $G_{cl}$, $\so+j\w$, satisfies
\be \label{eq:hw}
H(\omega):=\frac{\ln\left|G(\sigma_0+j\omega)\right|}{\sigma_0}\geq0.
\ee The \emph{feasible intervals} are the intervals on the relative stability boundary, $\Re(s)=\so$ and $\w\in[0,\infty)$ such that the inequality (\ref{eq:hw}) holds. The computation of these intervals is based on finding the subinterval satisfying (\ref{eq:hw}) on each interval that $H(\w)$ is monotonic (increasing or decreasing). The following lemma gives some properties of $H(\w)$.

\begin{lemma} \label{lem:propH}
Let $G$ have a transfer function representation (\ref{eq:tfG}). Assume that $G$ has no poles or zeros on the relative stability boundary. The functions $H(\w)$, $H'(\w)$ and $H''(\w)$ are continuous and the non-negative zeros of $H'(\w)$ are the non-negative real roots of the polynomial,
\be \label{eq:polyHp}
\Gamma_{p}(\w)\sum_{k=1}^m(\w-\w_{zk})\Gamma_z^k(\w)-
\Gamma_z(\w)\sum_{i=1}^n(\w-\w_{pi})\Gamma_p^i(\w)
\ee
where $\dsz=(\so-\sigma_{zk})$, $\dwz=(\w-\w_{zk})$, $\gamma_{zk}(\w)=\dsz^2+\dwz^2$, $\Gamma^k_z(\w)=\prod_{\substack{k_1=1\\k_1\neq k}}^m\gamma_{zk}(\w)$ for $k=1,\ldots,m$, $\dsp=(\so-\sigma_{pi})$, $\dwp=(\w-\w_{pi})$, $\gamma_{pi}(\w)=\dsp^2+\dwp^2$, $\Gamma^i_p(\w)=\prod_{\substack{i_1=1\\i_1\neq k}}^n\gamma_{pi}(\w)$ for $i=1,\ldots,n$, $\Gamma_z(\w)=\prod_{k=1}^m\gamma_{zk}(\w)$, $\Gamma_p(\w)=\prod_{i=1}^n\gamma_{pi}(\w)$.
\end{lemma}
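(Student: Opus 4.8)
The plan is to compute $H(\w)$ explicitly in terms of the data of the transfer function and then differentiate. Writing $|G(\so+j\w)|^2$ as a product/ratio of the factors $\gamma_{zk}(\w)=\dsz^2+\dwz^2$ and $\gamma_{pi}(\w)=\dsp^2+\dwp^2$, we get
\be \label{eq:Hexplicit}
H(\w)=\frac{1}{\so}\ln|G(\so+j\w)|
 =\frac{1}{2\so}\left(\ln k_G^2+\sum_{k=1}^m\ln\gamma_{zk}(\w)-\sum_{i=1}^n\ln\gamma_{pi}(\w)\right).
\ee
Since $G$ has no poles or zeros on the line $\Re(s)=\so$, every $\gamma_{zk}(\w)$ and $\gamma_{pi}(\w)$ is strictly positive for all real $\w$; hence each $\ln\gamma_{zk}$, $\ln\gamma_{pi}$ is $C^\infty$ on $\R$, and so $H$, $H'$, $H''$ are continuous (indeed smooth). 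This disposes of the continuity claim.

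For the characterization of the zeros of $H'$, I would differentiate \eqref{eq:Hexplicit} term by term. Because $\gamma_{zk}'(\w)=2\dwz$ and $\gamma_{pi}'(\w)=2\dwp$, one gets
\be \label{eq:Hprime}
H'(\w)=\frac{1}{\so}\left(\sum_{k=1}^m\frac{\w-\w_{zk}}{\gamma_{zk}(\w)}
 -\sum_{i=1}^n\frac{\w-\w_{pi}}{\gamma_{pi}(\w)}\right).
\ee
Now clear denominators: multiply through by $\so\,\Gamma_z(\w)\Gamma_p(\w)$, where $\Gamma_z=\prod_k\gamma_{zk}$ and $\Gamma_p=\prod_i\gamma_{pi}$. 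Using $\Gamma_z(\w)/\gamma_{zk}(\w)=\Gamma_z^k(\w)$ and $\Gamma_p(\w)/\gamma_{pi}(\w)=\Gamma_p^i(\w)$, the first sum becomes $\Gamma_p(\w)\sum_{k=1}^m(\w-\w_{zk})\Gamma_z^k(\w)$ and the second becomes $\Gamma_z(\w)\sum_{i=1}^n(\w-\w_{pi})\Gamma_p^i(\w)$, so that $\so\,\Gamma_z(\w)\Gamma_p(\w)\,H'(\w)$ equals the polynomial $\Gamma_p(\w)\sum_k(\w-\w_{zk})\Gamma_z^k(\w)-\Gamma_z(\w)\sum_i(\w-\w_{pi})\Gamma_p^i(\w)$ displayed as \eqref{eq:polyHp}. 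Since $\so\neq0$ and, by the no-pole/zero assumption, $\Gamma_z(\w)\Gamma_p(\w)>0$ for every real $\w$, the real zeros of $H'$ coincide exactly with the real roots of $\Gamma_p(\w)$; restricting to $\w\geq0$ gives the stated equivalence between the non-negative zeros of $H'(\w)$ and the non-negative real roots of \eqref{eq:polyHp}.

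Most of this is bookkeeping; the one point that needs care is confirming that each complex-conjugate pole/zero pair contributes a genuinely real, positive factor $\gamma$ and that the product over such pairs, after the substitution $s=\so+j\w$, reproduces $|G(\so+j\w)|^2$ with the real coefficient $k_G^2$ — i.e. that no spurious phase survives in the magnitude. This is where the complex-conjugacy convention stated after \eqref{eq:tfG} is used, and it is the only place where one must be slightly attentive rather than mechanical; everything else is routine differentiation and clearing of denominators. One should also note explicitly that multiplying by $\Gamma_z\Gamma_p$ introduces no new roots precisely because that product never vanishes on $\R$, which is exactly the content of the hypothesis that $G$ has no poles or zeros on $\Re(s)=\so$.
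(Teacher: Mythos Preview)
Your proof is correct and follows essentially the same approach as the paper: expand $H(\w)$ via the factored form of $G$ to obtain the logarithmic sum, differentiate term by term to get the rational expression for $H'(\w)$, and identify the polynomial \eqref{eq:polyHp} as its numerator, with continuity following from the strict positivity of each $\gamma_{zk}$ and $\gamma_{pi}$ under the no-pole/zero hypothesis. One slip to fix: in your penultimate sentence you wrote ``the real zeros of $H'$ coincide exactly with the real roots of $\Gamma_p(\w)$,'' where you clearly meant the polynomial \eqref{eq:polyHp}.
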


\begin{proof}
Substituting the transfer function of $G$ (\ref{eq:tfG}) into
(\ref{eq:hw}), the function $H(\w)$ can be written as, \be \label{eq:H}
H(\omega)=\frac{\ln|k_G|}{\sigma_0}+\frac{1}{2\sigma_0}\left(\sum_{k=1}^m
\ln\gamma_{zk}(\omega)-\sum_{i=1}^n
\ln\gamma_{pi}(\omega)\right). \ee
By taking the first and the second derivatives of $H(\w)$ (\ref{eq:H}), we get
\bea
H'(\w)&=&\frac{1}{\sigma_0}\left(\sum_{k=1}^m \frac{\dwz}{\gamma_{zk}(\omega)}-\sum_{i=1}^n \frac{\dwp}{\gamma_{pi}(\omega)} \right), \label{eq:Hp} \\
H''(\w)&=&\frac{1}{\sigma_0}\left(\sum_{k=1}^m \frac{\dwz^2-\dsz^2}{\gamma_{zk}^2(\omega)}
-\sum_{i=1}^n \frac{\dwp^2-\dsp^2}{\gamma_{pi}^2(\omega)} \right). \label{eq:Hpp}
\eea

Note that $H(\w)$, $H'(\w)$ and $H''(\w)$ are continuous except the points where $\gamma_{zk}(\w)$ or $\gamma_{pi}(\w)$ are equal to zero. These points are the poles or zeros of $G$ on $\Re(s)=\so$. The continuity results in Lemma \ref{lem:propH} follow. The polynomial (\ref{eq:polyHp}) is the numerator of the function $H'(\w)$ in (\ref{eq:Hp}) and the result follows.
\end{proof}

\begin{corollary} \label{cor:extremumpts}
Assume that $G$ has no poles or zeros on the relative stability boundary. Then function $H(\w)$ is monotonic on the intervals whose boundary points (without multiplicity) are the non-negative zeros of $H'(\w)$ (computed by Lemma \ref{lem:propH}), $0$ and $\infty$.
\end{corollary}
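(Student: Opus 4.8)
The plan is to deduce the corollary from the intermediate value theorem applied to $H'$, using the regularity of $H$ furnished by Lemma~\ref{lem:propH}. First I would record that, because $G$ has no poles or zeros on $\Re(s)=\so$, none of the factors $\gamma_{zk}(\w)$ or $\gamma_{pi}(\w)$ vanishes for $\w\in[0,\infty)$; hence, by Lemma~\ref{lem:propH}, the derivative $H'(\w)$ is continuous on all of $[0,\infty)$, and by (\ref{eq:Hp}) it is a rational function of $\w$ whose numerator is the polynomial (\ref{eq:polyHp}). A nonzero polynomial has only finitely many real roots, so $H'$ has only finitely many positive zeros, say $\w_1<\cdots<\w_N$ in increasing order; adjoining $0$ and $\infty$ recovers exactly the list of endpoints named in the statement, and these split $[0,\infty)$ into the finitely many subintervals $I_1=[0,\w_1],\,I_2=[\w_1,\w_2],\,\dots,\,I_{N+1}=[\w_N,\infty)$ (if $N=0$ the partition is just $[0,\infty)$ itself).

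Next I would fix one such subinterval $I_\ell$ and look at its interior. There $H'$ is continuous and, by the way the partition was built, has no zero; hence by the intermediate value theorem $H'$ keeps a constant sign throughout the interior of $I_\ell$. If that sign is $+1$ then $H$ is strictly increasing there, and if it is $-1$ then $H$ is strictly decreasing there; since $H$ itself is continuous on $[0,\infty)$ (again by Lemma~\ref{lem:propH}), this monotonicity extends to the closed interval $I_\ell$. As $\ell$ was arbitrary, $H$ is monotonic on each of $I_1,\dots,I_{N+1}$, which is the assertion. (If the numerator polynomial (\ref{eq:polyHp}) vanishes identically --- for instance when $|G(\so+j\w)|$ does not depend on $\w$ --- then $H'\equiv 0$ and $H$ is constant, so the conclusion holds trivially.)

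There is no genuine obstacle here; what needs a little care is purely bookkeeping. One should observe that inserting a zero of $H'$ of \emph{even} multiplicity as an endpoint is harmless --- it merely subdivides an interval on which $H$ stays monotonic --- which is the point of the phrase ``without multiplicity'' in the statement. One should also dispatch the two extreme pieces $I_1$ and $I_{N+1}$ by exactly the same constant-sign reasoning on their interiors, with no new idea required; and, should one want the direction of monotonicity on the unbounded piece $I_{N+1}$, it can be read off from $\lim_{\w\to\infty}H(\w)$, whose value in $\R\cup\{\pm\infty\}$ is fixed by the sign of $\so$ and by $m-n$.
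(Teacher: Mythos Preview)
Your argument is correct and follows essentially the same idea as the paper's own proof, which simply invokes the continuity of $H$ from Lemma~\ref{lem:propH} and observes that $H$ is therefore monotonic between its extremum points and the endpoints $0,\infty$. Your version is more careful---you make the finiteness of the partition explicit, invoke the intermediate value theorem for $H'$ rather than appealing vaguely to ``extremum points,'' and handle the degenerate and even-multiplicity cases---but the underlying route is the same.
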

\begin{proof}
By Lemma \ref{lem:propH}, the function $H(\w)$ is continuous since $\so$ is chosen such that there are no poles or zeros of $G$ on $\Re(s)=\so$. Therefore it is monotonic inside the intervals determined by its extremum points and the end points of its domain ($0$ and $\infty$).
\end{proof}

Because $H(\w)$ is monotonic on each interval in Corollary~\ref{cor:extremumpts}, we can find the subinterval where $H(\w)$ takes non-negative values satisfying (\ref{eq:hw}). For example, if the values of $H(\w)$ at the interval end points are non-negative, then $H(\w)$ is non-negative in this interval due to the monotonicity of $H(\w)$. Similarly if the values of $H(\w)$ at the interval end points have different signs, we can find the point where $H(\w)$ is equal to zero and choose the subinterval satisfying (\ref{eq:hw}). Based on this idea, feasible intervals $I_h=\cup_{i=1}^{n_h}I_h^i$ are computed by the following algorithm.
\begin{algorithm} \label{alg:H}
By Corollary \ref{cor:extremumpts}, compute all the intervals that $H(\w)$ is monotonic. \\
For each given interval $[\w_l^i,\w_u^i]$,
\begin{enumerate}
  \item $H(w_l^i)\geq0$ and $H(w_u^i)\geq0\ \Rightarrow\ I_h^i=[\w_l^i,\w_u^i]$,
  \item $H(w_l^i)<0$ and $H(w_u^i)<0 \Rightarrow\ I_h^i=\varnothing$,
  \item $H(w_l^i)H(w_u^i)<0$, find $\w_0$ such that $H(\w_0)=0$ by a bisection search, then
  \begin{itemize}
    \item $I_h^i=[\w_0,w_u^i]$ if $H(\w)$ is increasing,
    \item $I_h^i=[w_l^i,\w_0]$ if $H(\w)$ is decreasing.
  \end{itemize}
  \item $H(w_l^i)=0$ and $H(w_u^i)<0 \Rightarrow\ I_h^i=\{w_l^i\}$ or  \\
        $H(w_l^i)<0$ and $H(w_u^i)=0 \Rightarrow\ I_h^i=\{w_u^i\}$.
\end{enumerate}
\end{algorithm}

\begin{remark} The last case $(4)$ in the algorithm occurs when $H(\w)$ has multiple non-negative zeros and the other boundary point is negative. This isolated point is discarded if it is contained in an adjacent interval or considered as a single point otherwise.
\end{remark}

\begin{example} \label{ex:Ih} Given the plant
\be \label{eq:ex1}
G(s)=\frac{2s^2+s+3}{s^3+2s^2+3s+4}
\ee and the relative stability boundary $\Re(s)=-0.1$, $H(\w)$ plot is shown in Figure~\ref{fig:wh}. By Corollary~\ref{cor:extremumpts}, diamond-shaped points are boundary points, non-negative zeros of $H'(\w)$ and end points of the domain (the infinity is not shown). Note that $H(\w)$ is monotonic between diamond-shaped dots. Feasible intervals (shown in bold lines) are computed by Algorithm \ref{alg:H}. The first two diamond-shaped points are positive and the whole interval is the feasible interval $I_h^1$ by the first condition in the algorithm. The next two pairs of the diamond-shaped points have opposite signs. For each pair, the point $\w_0$ is computed by a bisection search (shown as circle-shaped points) and the feasible intervals are found by the third condition in the algorithm. The characteristic roots pass through only these intervals on the relative stability boundary.
\end{example}

\begin{figure}[!h]
   \begin{minipage}[t]{0.45\textwidth}
      \vspace{0pt}
      \includegraphics[width=\linewidth]{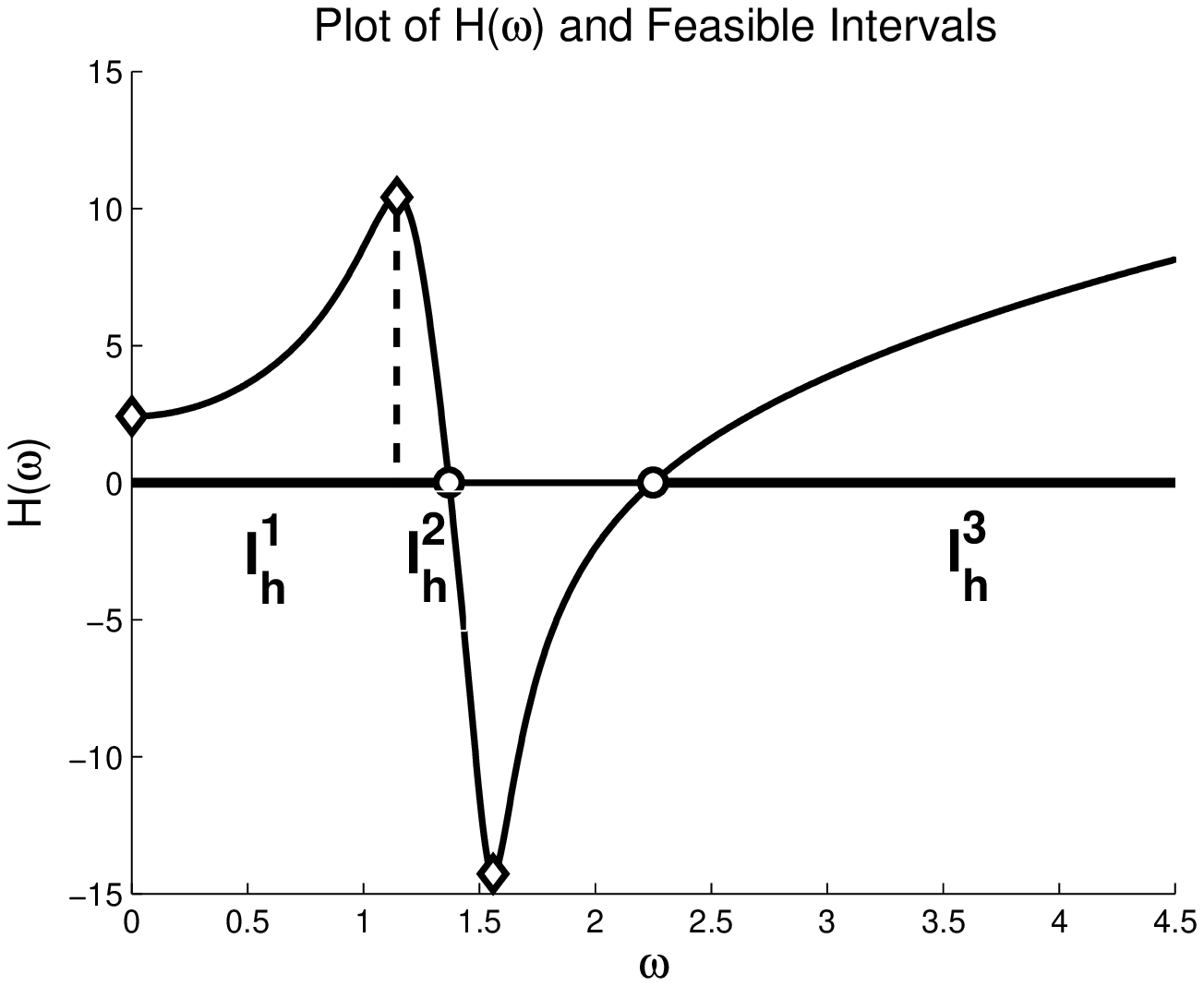}
      \caption{$H(\w)$ plot and feasible intervals, $I_h^1$, $I_h^2$, $I_h^3$.}
      \label{fig:wh}
   \end{minipage}
   \hfill
   \begin{minipage}[t]{0.45\textwidth}
      \vspace{0pt}\raggedright
      \includegraphics[width=\linewidth]{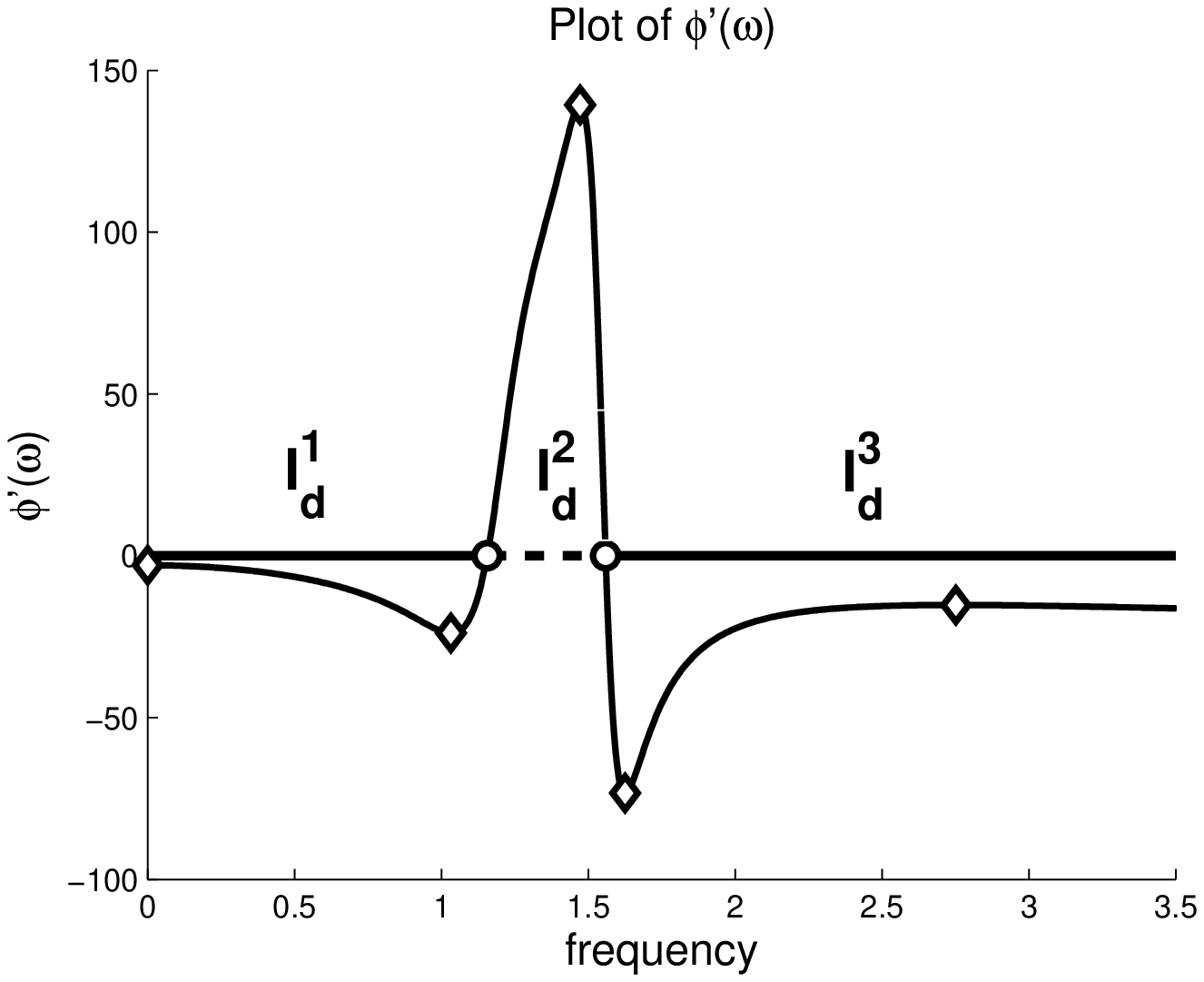}
      \caption{$\phi'(\w)$ plot and the frequency intervals with invariant crossing direction.}
      \label{fig:w1}
   \end{minipage}
\end{figure}

\subsection{Computation of Intervals with Invariant Crossing Direction} \label{sec:w1intvals}

We show that whether a characteristic root of $G_{cl}$ enters into or leaves the complementary stability region depends on \emph{only} the interval crossed on the relative stability boundary. Therefore, the crossing directions of the boundary crossing roots are same if their imaginary parts are inside the same interval on the relative stability boundary.

\begin{theorem} \label{thm:RT}
Let $s_0=\sigma_0+j\w_0$ be a characteristic root of $G_{cl}$ on the relative stability boundary at $h=h_0$.
Then the crossing direction of $s_0$ is,
\be \label{eq:RTthm}
\mathcal{CD}(s_0,h_0)=\textrm{sign}\left(\so \phi'(\w_0)\right).
\ee where
\be
\phi(\w):=\sum_{k=1}^m \tan^{-1}\left(\frac{\dwz}{\dsz}\right)
-\sum_{i=1}^n \tan^{-1}\left(\frac{\dwp}{\dsp}\right) \label{eq:phi}
-\w H(\w) \ \rm{for}\ \w\in[0,\infty).
\ee
\end{theorem}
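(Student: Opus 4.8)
The plan is to evaluate the crossing–direction expression (\ref{eq:RT}) directly at $(s_0,h_0)$ and then recognize the result as $\textrm{sign}\bigl(\so\,\phi'(\w_0)\bigr)$. First I would compute the partial derivatives of $f(s,h)=1+G(s)e^{-hs}$, namely $\partial f/\partial h=-sG(s)e^{-hs}$ and $\partial f/\partial s=e^{-hs}\bigl(G'(s)-hG(s)\bigr)$. Since $s_0$ is a characteristic root, $G(s_0)e^{-h_0s_0}=-1$; in particular $G(s_0)\neq0$, so $s_0$ is not a zero of $G$, and it is not a pole either (since $f$ would be unbounded there). This lets me divide by $G(s_0)$ and by the never-vanishing factor $e^{-h_0s_0}$. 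Writing $L(s):=\ln G(s)$ so that $L'(s)=G'(s)/G(s)$, the characteristic-root relation reduces (\ref{eq:RT}) to
\[
\mathcal{CD}(s_0,h_0)=\textrm{sign}\left(\Re\left(\frac{s_0}{L'(s_0)-h_0}\right)\right),
\]
and the rest of the proof is to show that $\Re\bigl(s_0/(L'(s_0)-h_0)\bigr)$ has the sign of $\so\,\phi'(\w_0)$.

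Next I would split $L'(s_0)-h_0$ into real and imaginary parts. Using the partial-fraction form $L'(s)=\sum_{k=1}^m\bigl(s-(\sigma_{zk}+j\w_{zk})\bigr)^{-1}-\sum_{i=1}^n\bigl(s-(\sigma_{pi}+j\w_{pi})\bigr)^{-1}$ and evaluating at $s_0=\so+j\w_0$, each summand becomes $(\dsz+j\dwz)^{-1}=(\dsz-j\dwz)/\gamma_{zk}(\w_0)$ (and analogously for the pole terms), so that
\[
\Re\bigl(L'(s_0)\bigr)=\sum_{k=1}^m\frac{\dsz}{\gamma_{zk}(\w_0)}-\sum_{i=1}^n\frac{\dsp}{\gamma_{pi}(\w_0)},\qquad
\Im\bigl(L'(s_0)\bigr)=-\left(\sum_{k=1}^m\frac{\dwz}{\gamma_{zk}(\w_0)}-\sum_{i=1}^n\frac{\dwp}{\gamma_{pi}(\w_0)}\right)=-\so\,H'(\w_0),
\]
the last equality by (\ref{eq:Hp}). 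The magnitude part of $G(s_0)e^{-h_0s_0}=-1$ gives $\ln|G(s_0)|=h_0\so$, i.e. $h_0=H(\w_0)$ by (\ref{eq:hw}).

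In parallel I would differentiate $\phi$ from (\ref{eq:phi}): since $\tfrac{d}{d\w}\tan^{-1}(\dwz/\dsz)=\dsz/\gamma_{zk}(\w)$ and $\tfrac{d}{d\w}\bigl(\w H(\w)\bigr)=H(\w)+\w H'(\w)$,
\[
\phi'(\w_0)=\sum_{k=1}^m\frac{\dsz}{\gamma_{zk}(\w_0)}-\sum_{i=1}^n\frac{\dsp}{\gamma_{pi}(\w_0)}-H(\w_0)-\w_0H'(\w_0)=\Re\bigl(L'(s_0)\bigr)-h_0-\w_0H'(\w_0),
\]
using $h_0=H(\w_0)$. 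Writing $L'(s_0)-h_0=A+jB$ with $A=\Re\bigl(L'(s_0)\bigr)-h_0$ and $B=-\so\,H'(\w_0)$, a direct computation gives
\[
\Re\left(\frac{s_0}{L'(s_0)-h_0}\right)=\Re\left(\frac{(\so+j\w_0)(A-jB)}{A^2+B^2}\right)=\frac{\so A+\w_0 B}{A^2+B^2},
\]
and the numerator is $\so A+\w_0 B=\so\bigl(\Re(L'(s_0))-h_0-\w_0H'(\w_0)\bigr)=\so\,\phi'(\w_0)$. Since $A^2+B^2>0$ — this is exactly the requirement that $\partial f/\partial s\neq0$, i.e. $L'(s_0)\neq h_0$, which is implicit in $\mathcal{CD}$ being well defined — the claimed formula $\mathcal{CD}(s_0,h_0)=\textrm{sign}\bigl(\so\,\phi'(\w_0)\bigr)$ follows.

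The two derivative computations are routine. The points that need care are: (i) confirming $s_0$ is neither a pole nor a zero of $G$, so that $\gamma_{zk}(\w_0),\gamma_{pi}(\w_0)>0$ and the logarithm/arctangent manipulations are legitimate; and (ii) the branch ambiguity of $\arg G(\so+j\w)$ and of $\tan^{-1}$, which is harmless because $\phi$ enters only through $\phi'$ and, locally around $\w_0$, $\arg G(\so+j\w)$ agrees with $\sum_{k}\tan^{-1}(\dwz/\dsz)-\sum_{i}\tan^{-1}(\dwp/\dsp)$ up to an additive constant. The only real obstacle is the bookkeeping: aligning the real/imaginary split of $L'(s_0)-h_0$ with the $H'(\w)$ formula of Lemma~\ref{lem:propH} so that $\so A+\w_0B$ telescopes cleanly into $\so\,\phi'(\w_0)$.
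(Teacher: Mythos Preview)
Your proposal is correct and follows essentially the same route as the paper: compute $\phi'(\w_0)$ from (\ref{eq:phi}), express $G'(s_0)G^{-1}(s_0)-h_0$ in terms of $\phi'(\w_0)$ and $H'(\w_0)$ (your $A+jB$ is exactly the paper's identity $G'(s_0)G^{-1}(s_0)-h_0=\phi'(\w_0)+\w_0H'(\w_0)-j\so H'(\w_0)$), and then evaluate the real part of $s_0(L'(s_0)-h_0)^{-1}$ to obtain $\so\phi'(\w_0)$ over a positive denominator. Your write-up is in fact more explicit than the paper's on the points $h_0=H(\w_0)$, the well-definedness condition $A^2+B^2>0$, and the branch issue for $\tan^{-1}$, but the underlying argument is the same.
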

\begin{proof}
The derivative $\phi(\w)$ (\ref{eq:phi}) at $\w=\w_0$ is
\be \label{eq:phip}
\phi'(\w_0)=\left.\left(\sum_{k=1}^m \frac{\dsz}{\gamma_{zk}(\w)}-
\sum_{i=1}^n \frac{\dsp}{\gamma_{pi}(\w)}-H(\w)-\w H'(\w)\right)\right|_{\w=\w_0}.
\ee
Using the transfer function representation in (\ref{eq:tfG}) and (\ref{eq:Hp},\ref{eq:phip}), we obtain
\be \label{eq:GpG}
G'(s_0)G^{-1}(s_0)-h_0=\phi'(\w_0)+\w_0 H'(\w_0)-j\sigma_0 H'(\w_0).
\ee
By (\ref{eq:RT}) and (\ref{eq:GpG}), the crossing direction of $\so$ at $h=h_0$ is equal to
\bea \label{eq:RTthm1}
\nonumber \mathcal{CD}(s_0,h_0)&=&\textrm{sign}\left(\Re\left(s_0\left(\frac{G'(s_0)}{G(s_0)}-h_0\right)^{-1}\right)\right)=
\textrm{sign}\left(\Re\left(\frac{\so+j\w_0}{\phi'(\w_0)+\w_0 H'(\w_0)-j\sigma_0 H'(\w_0)}\right)\right), \\
\nonumber &=&\textrm{sign}(\so\phi'(\w_0)).
\eea
\end{proof}

The continuity properties of $\phi(\w)$ up to the second derivatives and the computation of the non-negative zeros of $\phi''(\w)$ are given in the following lemma.

\begin{lemma} \label{lem:propPhi}
Assume that $G$ has no poles or zeros on the relative stability boundary. The functions $\phi(\w)$, $\phi'(\w)$ and $\phi''(\w)$ are continuous and the non-negative zeros of $\phi''(\w)$ are the non-negative real roots of the polynomial,
\be \label{eq:p2}
(\Gamma_p(\w))^2\sum_{k=1}^m \Phi_z^k(w)(\Gamma_z^k(\w))^2-
(\Gamma_z(\w))^2\sum_{i=1}^n\Phi_p^i(w)(\Gamma_p^i(\w))^2
\ee
where the functions $\Gamma^k_z(\w)$ for $k=1,\ldots,m$, $\Gamma^i_p(\w)$ for $i=1,\ldots,n$, $\Gamma_z(\w)$ and $\Gamma_p(\w)$ are defined in Lemma \ref{lem:propH} and $\Phi_z^k(\w)=(2\w_{zk}-3\w)\dsz^2+(2\w_{zk}-\w)\dwz^2-2\so\dsz\dwz$ for $k=1,\ldots,m$, $\Phi_p^i(\w)=(2\w_{pi}-3\w)\dsp^2+(2\w_{pi}-\w)\dwp^2-2\so\dsp\dwp$ for $i=1,\ldots,n$.
\end{lemma}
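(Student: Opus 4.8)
The plan is to follow the pattern of Lemma~\ref{lem:propH}: first settle the continuity assertions, then differentiate the closed form of $\phi'(\w)$ in (\ref{eq:phip}) once more and collapse the result into a single rational function whose numerator is exactly the polynomial (\ref{eq:p2}).

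\emph{Continuity.} Since $G$ has no pole or zero on $\Re(s)=\so$, we have $\dsz=\so-\sigma_{zk}\neq0$ and $\dsp=\so-\sigma_{pi}\neq0$, hence $\gamma_{zk}(\w)=\dsz^2+\dwz^2>0$ and $\gamma_{pi}(\w)=\dsp^2+\dwp^2>0$ for every real $\w$. Consequently each $\tan^{-1}(\dwz/\dsz)$ is a smooth function of $\w$, and $H(\w)$ — hence $\w H(\w)$ — is smooth by Lemma~\ref{lem:propH}; therefore $\phi$, and in particular $\phi'$ and $\phi''$, are continuous on $[0,\infty)$.

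\emph{The polynomial.} Differentiating (\ref{eq:phip}), the block $-H(\w)-\w H'(\w)$ yields $-2H'(\w)-\w H''(\w)$, while $\frac{d}{d\w}\bigl(\dsz/\gamma_{zk}(\w)\bigr)=-2\dsz\dwz/\gamma_{zk}^2(\w)$ and similarly for the pole terms, so
\be
\phi''(\w)=-\sum_{k=1}^m\frac{2\dsz\dwz}{\gamma_{zk}^2(\w)}+\sum_{i=1}^n\frac{2\dsp\dwp}{\gamma_{pi}^2(\w)}-2H'(\w)-\w H''(\w).
\ee
Substituting $H'$ from (\ref{eq:Hp}) and $H''$ obtained by differentiating it once more, and regrouping by index, each fixed $k$ produces three contributions, namely $-2\dsz\dwz/\gamma_{zk}^2$, $-2\dwz/(\so\gamma_{zk})$ and $+\w(\dwz^2-\dsz^2)/(\so\gamma_{zk}^2)$; placed over the common denominator $\so\gamma_{zk}^2$ and simplified using $\gamma_{zk}=\dsz^2+\dwz^2$ together with $\w=\w_{zk}+\dwz$, the numerator collapses to $\Phi_z^k(\w)$, and symmetrically the $i$-th pole group becomes $-\Phi_p^i(\w)$ over $\so\gamma_{pi}^2$. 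Hence $\phi''(\w)=\frac1\so\bigl(\sum_{k}\Phi_z^k(\w)/\gamma_{zk}^2(\w)-\sum_{i}\Phi_p^i(\w)/\gamma_{pi}^2(\w)\bigr)$. Because $\gamma_{zk}(\w)\Gamma_z^k(\w)=\Gamma_z(\w)$ and $\gamma_{pi}(\w)\Gamma_p^i(\w)=\Gamma_p(\w)$, clearing this to the common denominator $\so\,(\Gamma_z(\w))^2(\Gamma_p(\w))^2$ puts precisely the polynomial (\ref{eq:p2}) in the numerator. Under the standing assumption, $\so\neq0$ and $\Gamma_z(\w),\Gamma_p(\w)>0$ for all real $\w$, so the denominator never vanishes; thus for $\w\geq0$ one has $\phi''(\w)=0$ if and only if (\ref{eq:p2}) vanishes, which is the claim.

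\emph{Main obstacle.} The only non-routine step is the coefficient bookkeeping that identifies the per-index numerators with $\Phi_z^k$ and $\Phi_p^i$: one must expand $-2\so\dsz\dwz-2\dwz\gamma_{zk}+\w(\dwz^2-\dsz^2)$ and match it against $(2\w_{zk}-3\w)\dsz^2+(2\w_{zk}-\w)\dwz^2-2\so\dsz\dwz$, being attentive to the sign of $\dsz^2-\dwz^2$ appearing in $H''$. Everything else is a repetition of the reasoning already used for Lemma~\ref{lem:propH}.
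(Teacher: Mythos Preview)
Your proposal is correct and follows essentially the same route as the paper: both compute $\phi''(\w)$ by differentiating (\ref{eq:phip}), invoke the non-vanishing of the $\gamma_{zk},\gamma_{pi}$ to get continuity, and identify (\ref{eq:p2}) as the numerator of the resulting rational function. The paper's proof simply asserts this last identification without showing the per-index regrouping that you carry out explicitly, so your version is more detailed but not methodologically different.
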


\begin{proof}
The second derivative of the function $\phi(\w)$ is
\be \label{eq:phipp}
\phi''(\w)= -2 \sum_{k=1}^m \frac{\dsz\dwz}{\gamma_{zk}^2 (\w)}+2 \sum_{i=1}^n \frac{\dsp\dwp}{\gamma_{pi}^2(\w)}-2 H'(\w) - \w H''(\w).
\ee

By Lemma \ref{lem:propH}, the functions $H(\w)$, $H'(\w)$ and $H''(\w)$ are continuous if there are no poles or zeros of $G$ on $\Re(s)=\so$. The terms in the functions $\phi(\w)$ (\ref{eq:phi}), $\phi'(\w)$ (\ref{eq:phip}) and $\phi''(\w)$ (\ref{eq:phipp}) are continuous except the points that the condition holds. The continuity results follow. The polynomial (\ref{eq:p2}) is the numerator of the function $\phi''(\w)$ (\ref{eq:phipp}) and the result follows.
\end{proof}

\begin{corollary} \label{cor:invariantpts}
Assume that $G$ has no poles or zeros on the relative stability boundary. Then the crossing direction of characteristic roots of $G_{cl}$ is invariant inside the intervals whose boundary points (without multiplicity) are the non-negative zeros of $\phi'(\w)$, $0$ and $\infty$.
\end{corollary}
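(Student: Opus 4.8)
The plan is to reduce the statement to a sign-constancy argument for the scalar function $\phi'(\w)$, exactly parallel to the way Corollary~\ref{cor:extremumpts} follows from the monotonicity of $H(\w)$. First I would invoke Theorem~\ref{thm:RT}: for any characteristic root $s_0=\so+j\w_0$ of $G_{cl}$ on the relative stability boundary (at whatever delay $h_0$ places it there), the crossing direction is $\mathcal{CD}(s_0,h_0)=\textrm{sign}\bigl(\so\,\phi'(\w_0)\bigr)$. Since $\so<0$ is a fixed constant independent of the particular root, one has $\textrm{sign}(\so\,\phi'(\w_0))=-\,\textrm{sign}(\phi'(\w_0))$ whenever $\phi'(\w_0)\neq0$. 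Hence the crossing direction at a boundary point is completely determined by the sign of $\phi'$ at its imaginary part, and it suffices to show that $\phi'$ has constant sign on each of the claimed intervals.

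Next I would use the regularity supplied by Lemma~\ref{lem:propPhi}: under the standing assumption that $G$ has no poles or zeros on $\Re(s)=\so$, the function $\phi'(\w)$ is continuous on $[0,\infty)$. A continuous real function can change sign only by passing through zero; hence on each open interval whose endpoints are two consecutive non-negative zeros of $\phi'$ (together with the interval from $0$ to the first such zero and the interval from the last such zero to $\infty$), the intermediate value theorem forces $\phi'$ to keep a single sign. Combining this with the reduction in the previous paragraph, $\mathcal{CD}(s_0,h_0)$ takes one and the same value for all boundary crossing roots whose imaginary part lies in a fixed such interval, which is the assertion.

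I do not expect a genuine obstacle in the corollary itself; the substantive work has already been done in Theorem~\ref{thm:RT} (deriving the closed-form crossing-direction formula) and in Lemma~\ref{lem:propPhi} (continuity of $\phi'$). The only point that needs care is that $\phi'(\w)$ is \emph{not} rational --- it inherits the logarithmic terms of $H(\w)$ --- so its non-negative zeros cannot be read off as roots of a polynomial the way the zeros of $H'$ were obtained in Lemma~\ref{lem:propH}. This is precisely why Lemma~\ref{lem:propPhi} instead characterizes the zeros of $\phi''$ via the polynomial~(\ref{eq:p2}): $\phi'$ is monotone between consecutive non-negative zeros of $\phi''$, so those zeros can be located from~(\ref{eq:p2}) and then each zero of $\phi'$ pinned down by a bisection on a monotone piece, in complete analogy with Algorithm~\ref{alg:H}. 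None of this is needed for the logical content of the corollary, only for turning it into a numerical procedure, so I would keep the proof of the corollary itself to the two short paragraphs above.
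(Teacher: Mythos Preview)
Your proposal is correct and follows essentially the same approach as the paper: invoke Theorem~\ref{thm:RT} to reduce the crossing direction to $\textrm{sign}(\so\phi'(\w_0))$, then use the continuity of $\phi'$ (Lemma~\ref{lem:propPhi}) to conclude that its sign is constant between consecutive non-negative zeros. The paper's own proof is a two-line version of your first two paragraphs; your third paragraph on why $\phi''$ rather than $\phi'$ is characterized polynomially is accurate commentary but, as you note, not part of the proof itself.
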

\begin{proof}
By Lemma \ref{lem:propPhi} the function $\phi(\w)$ is continuous, the monotonicity of $\phi(\w)$ changes only at the points where $\phi'(\w)=0$. The assertion follows from Theorem \ref{thm:RT}.
\end{proof}

Based on the monotonicity of $\phi'(\w)$ on the intervals whose boundary points are the non-negative zeros of $\phi''(\w)$, $0$ and $\infty$, the non-negative zeros of $\phi'(\w)$ are computed by a bisection search. Then the boundary points of the intervals with invariant crossing direction $I_d=\cup_{i=1}^{n_d} I_d^i$ are the non-negative zeros of $\phi'(\w)$, $0$ and $\infty$. The algorithm is given below.
\begin{algorithm} \label{alg:Phi}
Compute the set $\{w_{2i}\}_{i=1}^{n_2}$ whose elements are the non-negative zeros of $\phi''(\w)$ calculated by Lemma \ref{lem:propPhi} , $0$ and $\infty$ in ascending order. \\
For each pair $\w_{2i},\w_{2(i+1)}$
\begin{enumerate}
  \item $\phi'(\w_{2i})=0$, set $\w_1=\w_{2i}$,
  \item $\phi'(\w_{2(i+1)})=0$, set $\w_1=\w_{2(i+1)}$,
  \item $\phi'(\w_{2i})\phi'(\w_{2(i+1)})<0$, find $\w_1$ such that $\phi'(\w_1)=0$ by a bisection search,
  \item add $\w_1$ to the set of non-negative zeros of $\phi'(\w)$.
\end{enumerate}
The boundary points of the intervals $I_d$ are the set of non-negative zeros of $\phi'(\w)$, $0$ and $\infty$ by Corollary \ref{cor:invariantpts}.
\end{algorithm}

\begin{example} \label{ex:Id}
For the same example, $\phi'(\w)$ plot is shown in Figure~\ref{fig:w1}. Diamond-shaped points are the non-negative zeros of $\phi''(\w)$ computed by Lemma~\ref{lem:propPhi}. Since there are two sign changes in diamond-shaped points, two zeros of $\phi'(\w)$ are computed by a bisection algorithm in step $3$ of the Algorithm \ref{alg:Phi} (shown as circle-shaped points). By Theorem~\ref{thm:RT}, the intervals $I_d^1$ and $I_d^3$ (shown as bold lines) have crossing direction $+1$ (entering $\Cs$) since the sign of $\so\phi'(\w)$ is positive. The interval $I_d^2$ (shown as a dashed line) has crossing direction $-1$ (leaving $\Cs$) because the sign of $\so\phi'(\w)$ is negative.
\end{example}

In section~\ref{sec:whintvals} we compute the feasible intervals $I_h$ on the relative stability boundary by Algorithm~\ref{alg:H}. We divided the relative stability boundary into the intervals with invariant crossing directions $I_d$ by Algorithm~\ref{alg:Phi}. We obtain the feasible intervals with invariant crossing direction by intersecting these two sets of intervals, i.e., $I=\cup_{i=1}^{n_I} I_i=I_h \cap I_d$.

\begin{example}
In Table \ref{table:intI}, the first column shows the computed feasible intervals in Example~\ref{ex:Ih}. The second column contains the intervals with invariant crossing directions found in Example~\ref{ex:Id}. By intersection, we compute the intervals $I=\cup_{i=1}^{4} I_i$ where for each interval $I_i$, the crossing direction is invariant and $H(\w)$ is non-negative. The functions $H(\w)$ and $\phi(w)$ are monotonic on each interval $I_i$. The monotonicities of these functions are defined as the signs of $H'(\w)$ and $\phi'(\w)$, $\Phi_i$ and $H_i$ respectively. The last three columns in Table~\ref{table:intI} shows $\Phi_i$, $H_i$ and crossing direction for each interval $I_i$ which is equal to $-\Phi_i$ by Theorem~\ref{thm:RT} and $\so<0$.
\end{example}

\begin{table}[h]
  \tbl{Computation of the feasible intervals with invariant crossing direction.}
{\begin{tabular}{llllll}\toprule
  $I_h=\cup_{i=1}^3 I_h^i$ & $I_d=\cup_{i=1}^3 I_d^i$ & $I=\cup_{i=1}^4 I_i$ & $H_i$ & $\Phi_i$ & $\mathcal{CD}_i$ \vspace{1mm} \\
\colrule
  \rule{0pt}{3ex}
  $I_h^1=[0,1.144]$ & $I_d^1=[0,1.156]$ & $I_1=[0,1.144]$ & $+1$ & $-1$ & $+1$ (entering into $\Cs$)\vspace{1mm} \\
  $I_h^2=[1.144,1.369]$ & $I_d^2=[1.156,1.559]$ & $I_2=[1.144,1.156]$ & $-1$ & $-1$ & $+1$ (entering into $\Cs$)\vspace{1mm}\\
  $I_h^3=[2.249,\infty]$ & $I_d^3=[1.559,\infty]$ & $I_3=[1.156,1.369]$ & $-1$ & $+1$ & $-1$ (leaving $\Cs$)\vspace{1mm}\\
   &  &  $I_4=[2.249,\infty]$ & $+1$ & $-1$ & $+1$ (entering into $\Cs$)\vspace{1mm}\\
\botrule
\end{tabular}} \label{table:intI}
\end{table}

\subsubsection{Remarks on Assumptions}

Before computing the boundary crossing roots in the next section, we give some remarks on our assumptions. We assume that $G$ has no poles or zeros on the relative stability boundary $\Re(s)=\so$. Since $\so$ is a design parameter, a slight perturbation of $\so$ is sufficient to satisfy this assumption in numerical implementation.

Our second assumption is that there is no characteristic root on the relative stability boundary with multiplicity more than one, i.e., $f(s_0, h_0)=0$, $\frac{\partial f}{\partial s}(s_0, h_0)=0$. In general, this assumption always holds. Note that when the second assumption does not hold, the formula for the crossing direction (\ref{eq:RT}) is not valid. In this case the computation of the crossing direction requires a high-order analysis and recently there are on-going research on this direction \cite{FuACC07}, \cite{JarlebringTDW09}. In the following proposition, we give a necessary and sufficient condition to check the boundary crossing roots with multiplicity more than one.

\begin{proposition} \label{prop:multipleRoots}
The characteristic root $s_0=\sigma_0+j\w_0, s_0\in\C, \w_0\geq0$ of $G_{cl}$ has multiplicity more than one at $h=h_0$ if and only if there exists $\w_0\in\left\{\omega\geq0: H'(\w)=0\ \textrm{and}\ \phi'(\w)=0 \right\}$ such that
\be \label{eq:multipleset}
h_0=H(\w_0)\geq0\ \textrm{and}\ f(s_0,h_0)=0.
\ee
\end{proposition}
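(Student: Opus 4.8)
The plan is to characterize a multiple characteristic root through the simultaneous vanishing of $f(s_0,h_0)$ and $\partial f/\partial s\,(s_0,h_0)$, and then to translate these two complex conditions into real conditions on the functions $H$ and $\phi$ already introduced. First I would recall that by definition $f(s,h)=1+G(s)e^{-hs}$, so $\frac{\partial f}{\partial s}(s,h)=G'(s)e^{-hs}-hG(s)e^{-hs}=e^{-hs}\bigl(G'(s)-hG(s)\bigr)$. Since $e^{-hs}$ never vanishes, the condition $\frac{\partial f}{\partial s}(s_0,h_0)=0$ is equivalent to $G'(s_0)-h_0G(s_0)=0$, i.e. $G'(s_0)G^{-1}(s_0)-h_0=0$ (here $G(s_0)\neq0$ because $f(s_0,h_0)=0$ forces $G(s_0)e^{-h_0s_0}=-1\neq0$). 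Thus a root of multiplicity more than one is exactly a pair $(s_0,h_0)$ with $f(s_0,h_0)=0$ and $G'(s_0)G^{-1}(s_0)=h_0$.

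Next I would invoke the identity already derived in the proof of Theorem \ref{thm:RT}, namely equation (\ref{eq:GpG}):
\be
G'(s_0)G^{-1}(s_0)-h_0=\phi'(\w_0)+\w_0 H'(\w_0)-j\sigma_0 H'(\w_0),
\ee
valid at any point $s_0=\so+j\w_0$ on the relative stability boundary. Setting the left-hand side to zero and separating real and imaginary parts gives $\phi'(\w_0)+\w_0H'(\w_0)=0$ and $-\sigma_0H'(\w_0)=0$. Since $\sigma_0<0$, the second equation forces $H'(\w_0)=0$, and then the first reduces to $\phi'(\w_0)=0$. Hence $\frac{\partial f}{\partial s}(s_0,h_0)=0$ together with $s_0$ lying on the boundary is equivalent to $\w_0$ belonging to the set $\{\w\geq0:H'(\w)=0\text{ and }\phi'(\w)=0\}$. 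Conversely, if $\w_0$ lies in that set then (\ref{eq:GpG}) shows $G'(s_0)G^{-1}(s_0)=h_0$ for the specific $h_0$ making the expression consistent — but we still need $h_0$ to be pinned down, which is where $H$ enters.

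It remains to handle the condition $f(s_0,h_0)=0$ itself and to identify $h_0$. Writing $f(s_0,h_0)=0$ as $G(s_0)e^{-h_0 s_0}=-1$ and taking magnitudes gives $|G(\so+j\w_0)|e^{-h_0\so}=1$, i.e. $h_0=\frac{\ln|G(\so+j\w_0)|}{\so}=H(\w_0)$; this is precisely the magnitude relation (\ref{eq:hw})--(\ref{eq:H}) and it forces $h_0=H(\w_0)$, which must be $\geq0$ for $h_0$ to be an admissible delay. Then the remaining content of $f(s_0,h_0)=0$ beyond the magnitude part is the phase relation, which is captured by keeping ``$f(s_0,h_0)=0$'' as a standing requirement in (\ref{eq:multipleset}) (equivalently: $\angle G(\so+j\w_0)-h_0\w_0\equiv\pi\ (\mathrm{mod}\ 2\pi)$, which is exactly $\phi(\w_0)$-type data — but stating it as $f(s_0,h_0)=0$ is cleanest). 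Putting the pieces together: $(s_0,h_0)$ is a multiple root iff $H'(\w_0)=0$, $\phi'(\w_0)=0$, $h_0=H(\w_0)\geq0$, and $f(s_0,h_0)=0$, which is the claimed statement. The main obstacle is purely bookkeeping: making sure the equivalence is genuinely two-way — in particular that from the four conditions in (\ref{eq:multipleset}) one can reconstruct both $f(s_0,h_0)=0$ and $\frac{\partial f}{\partial s}(s_0,h_0)=0$ — and checking that the nonvanishing of $G(s_0)$ and $e^{-h_0 s_0}$ is automatic so that no division or logarithm is ever taken illegitimately; the analytic core is entirely supplied by (\ref{eq:GpG}).
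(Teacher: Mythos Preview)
Your proposal is correct and follows essentially the same route as the paper: both reduce multiplicity $>1$ to $G'(s_0)G^{-1}(s_0)-h_0=0$, invoke the identity (\ref{eq:GpG}), and read off $H'(\w_0)=\phi'(\w_0)=0$ from the real and imaginary parts. Your version is in fact more careful than the paper's about the converse direction, the nonvanishing of $G(s_0)$, and the role of $\sigma_0<0$; the only minor imprecision is that (\ref{eq:GpG}) already presupposes $h_0=H(\w_0)$, but since you impose this explicitly there is no gap.
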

\begin{proof}
The conditions (\ref{eq:multipleset}) in the proposition guarantee that the characteristic root crosses $\Re(s)=\so$. The characteristic root of $G_{cl}$ with the multiplicity more than one holds
\be \label{eq:fsh0}
G'(s_0)G^{-1}(s_0)-h_0=0.
\ee
Substituting real and imaginary parts of $s_0$ into (\ref{eq:GpG}), the equation (\ref{eq:fsh0}) can be written as
\bd
\phi'(\w_0)+\w_0H'(\w_0)-jH'(\w_0)\sigma_0=0
\ed which is equivalent to
\bd
H'(\w_0)=0\ \textrm{and}\ \phi'(\w_0)=0.
\ed
The assertion follows.
\end{proof}

We choose $\so$ such that there are no poles or zeros of $G$ on the relative stability boundary $\Re(s)=\so$. We compute the nonnegative real zeros of $H'(\w)$ and $\phi'(\w)$ by Algorithm~\ref{alg:H} and \ref{alg:Phi}. We check whether the conditions in Proposition~\ref{prop:multipleRoots} hold for chosen $\so$. If it does, we slightly perturb $\so$ to satisfy the second assumption.
\vspace{-.6cm}
\section{Main Results} \label{sec:mainres}
The characteristic roots of $G_{cl}$ cross the relative stability boundary \emph{only} at critical delays. The number of characteristic roots inside $\Cs$ changes at each critical delay and it is constant inside the delay interval, an interval between two consecutive critical delays. We compute the number of characteristic roots inside $\Cs$ for each delay interval by first calculating the initial number of characteristic roots inside $\Cs$ (i.e., $h=0$) and updating this number at every critical delay.

Both methods, the relative stability analysis of closed-loop dead-time systems \emph{for a delay upper bound, $h_{\max}$} and \emph{for all delays}, are based on computation of critical delays in an increasing order with different stopping criteria. This computation is described in the next section and the relative stability analysis methods are explained in Section~\ref{sec:stabI}.

\subsection{Computation of Critical Delays in an Increasing Order} \label{sec:critdelays}
We first discuss to compute the characteristic roots crossing \emph{one} interval $I_i$ with increasing critical delays. This is necessary to compute the characteristic roots crossing the relative stability boundary in an increasing order until the final delay $h_{\max}$. Based on the computation on one interval, we develop a procedure to compute the boundary crossing roots with increasing delays over all the intervals in $I$.

By substituting (\ref{eq:tfG},\ref{eq:H}), we obtain the phase equation of (\ref{eq:complex}) on $\Re(s)=\so$ as
\be\label{eq:phiequation}
(2k+1)\pi-\phi_0=\phi(\w),\;k\in\Z
\ee where $\phi_0$ is the offset difference between the phase of $G_{cl}$ and $\phi(\w)$ (\ref{eq:phi}) defined as $\phi_0=\angle{G(\sigma_0)}-\phi(0)$.

Note that the left-hand side of the equation (\ref{eq:phiequation}) represents the horizontal lines \mbox{$(2l+1)\pi$} or \mbox{$2l\pi$} for $l\in\Z$ because $\phi_0$ is either $0$ or $\pi$. Each intersection of horizontal lines and $\phi(\w)$ over the intervals $I$ corresponds to a boundary crossing root since any point on $I$ holds the magnitude condition (\ref{eq:hw}) and the intersection point satisfies the phase condition of the characteristic equation (\ref{eq:phiequation}) on $\Re(s)=\so$.

On the interval $I_i$, the function $\phi(w)$ is monotonic. If a horizontal line intersects the function $\phi(w)$, we can compute the intersection point by a bisection algorithm for $\phi(\w)$ over the interval $I_i$. The intersection point is the imaginary part of the boundary crossing root on the interval $I_i$ and the corresponding critical delay is the value of $H(\w)$ for this point. If there is no horizontal line intersecting $\phi(\w)$ on $I_i$, we discard the interval since there is no characteristic root crossing this interval.

Both functions $\phi(w)$ and $H(\w)$ are monotonic on the interval $I_i$. Therefore the critical delays of intersections of horizontal lines and $\phi(\w)$ are also monotonic on $I_i$. Among the horizontal lines crossing $\phi(w)$ on $I_i$, the horizontal line resulting the smallest critical delay for $I_i=[\w_l^i,\w_r^i]$ is
\begin{itemize}
\item the smallest one if both $\phi(\w)$ and $H(\w)$ are increasing or decreasing ($\Phi_i H_i>0$) and it is equal to
$\phi_h^i=\left(2\left\lceil \frac{\phi(\w_l^i)+\phi_0}{2\pi}-\frac{1}{2} \right\rceil+1\right)\pi-\phi_0$,
\item the largest one if $\phi(\w)$ and $H(\w)$ have opposite monotonicity ($\Phi_i H_i<0$) and it is equal to
$\phi_h^i=\left(2\left\lfloor \frac{\phi(\w_r^i)+\phi_0}{2\pi}-\frac{1}{2} \right\rfloor+1\right)\pi-\phi_0$

\end{itemize} where $\Phi_i$ and $H_i$ are signs of $\phi'(\w)$ and $H'(\w)$ on the interval $I_i$. We find the point $\w^i$ such that $\phi_h^i=\phi(\w^i)$ by a bisection search. The smallest critical delay for the interval $I_i$ is the time-delay of this point, $h^i=H(\w^i)$. We can compute other critical delays for the same interval in an increasing order. Since the distance between consecutive horizontal lines is $2\pi$, we update the next horizontal line which is
\begin{itemize}
\item the one above $\phi_h^i \leftarrow \phi_h^i + 2\pi$ if $\Phi_i H_i>0$,
\item the one below  $\phi_h^i \leftarrow \phi_h^i - 2\pi$ if $\Phi_i H_i<0$
\end{itemize} and compute the intersection point $\w^i$ by a bisection search for the updated horizontal line and the corresponding next critical delay. The interval is discarded when the horizontal line is outside the interval range, i.e., $\phi_h^i \notin [\phi_{\min}^i,\phi_{\max}^i]$ where $\phi_{\min}^i$ and $\phi_{\max}^i$ are the minimum and maximum of $\phi(\w_r^i)$ and $\phi(\w_l^i)$ respectively.

\begin{example}
In Figure \ref{fig:I1}, the plots of $\phi(\w)$ and $H(\w)$ are given for the interval $I_1$ in Table \ref{table:intI}. The horizontal lines are the left-hand side of (\ref{eq:phiequation}). The functions $H(\w)$ and $\phi(\w)$ have opposite monotonicity ($\Phi_1 H_1<0$).  Among the horizontal lines intersecting $\phi(\w)$ on $I_1$, the largest one, $\phi_h^1=-\pi$, results the smallest critical delay. By a bisection search, the intersection point is $\w^1=0.642$ and its delay is $h^1=H(\w^1)=4.488$. Using the same approach for the updated horizontal line $\phi_h^1 \leftarrow \phi_h^1-2\pi=-3\pi$, the second critical delay on the interval $I_1$ is $h^1=9.209$ at $\w^1=1.031$.
\end{example}

\begin{figure}[!h]
   \begin{minipage}[t]{0.45\textwidth}
      \vspace{0pt}
      \includegraphics[width=\linewidth]{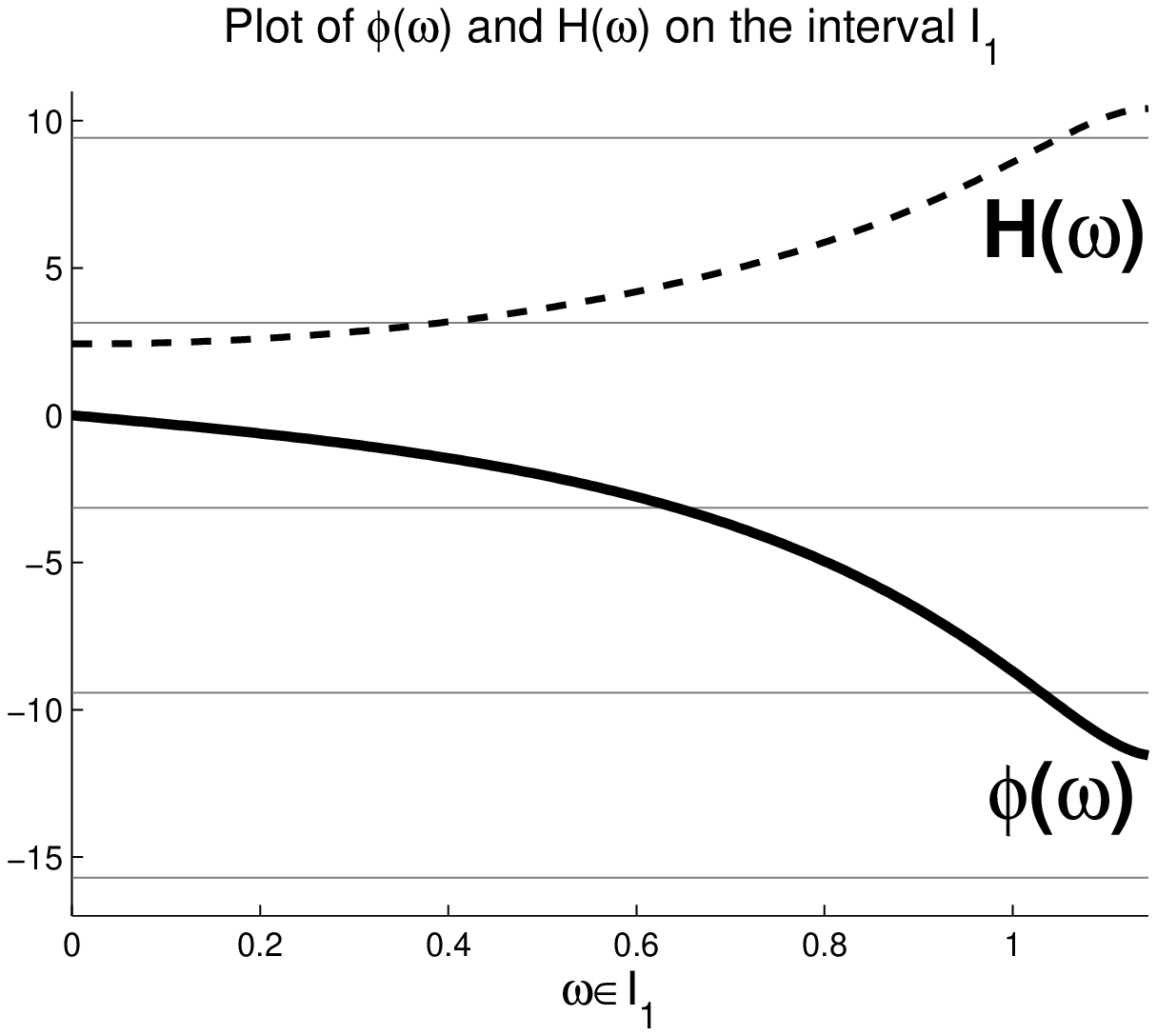}
      \caption{$\phi(\w)$ and $H(\w)$ plot on the interval $I_1$}
      \label{fig:I1}
   \end{minipage}
   \hfill
   \begin{minipage}[t]{0.45\textwidth}
      \vspace{0pt}\raggedright
      \includegraphics[width=\linewidth]{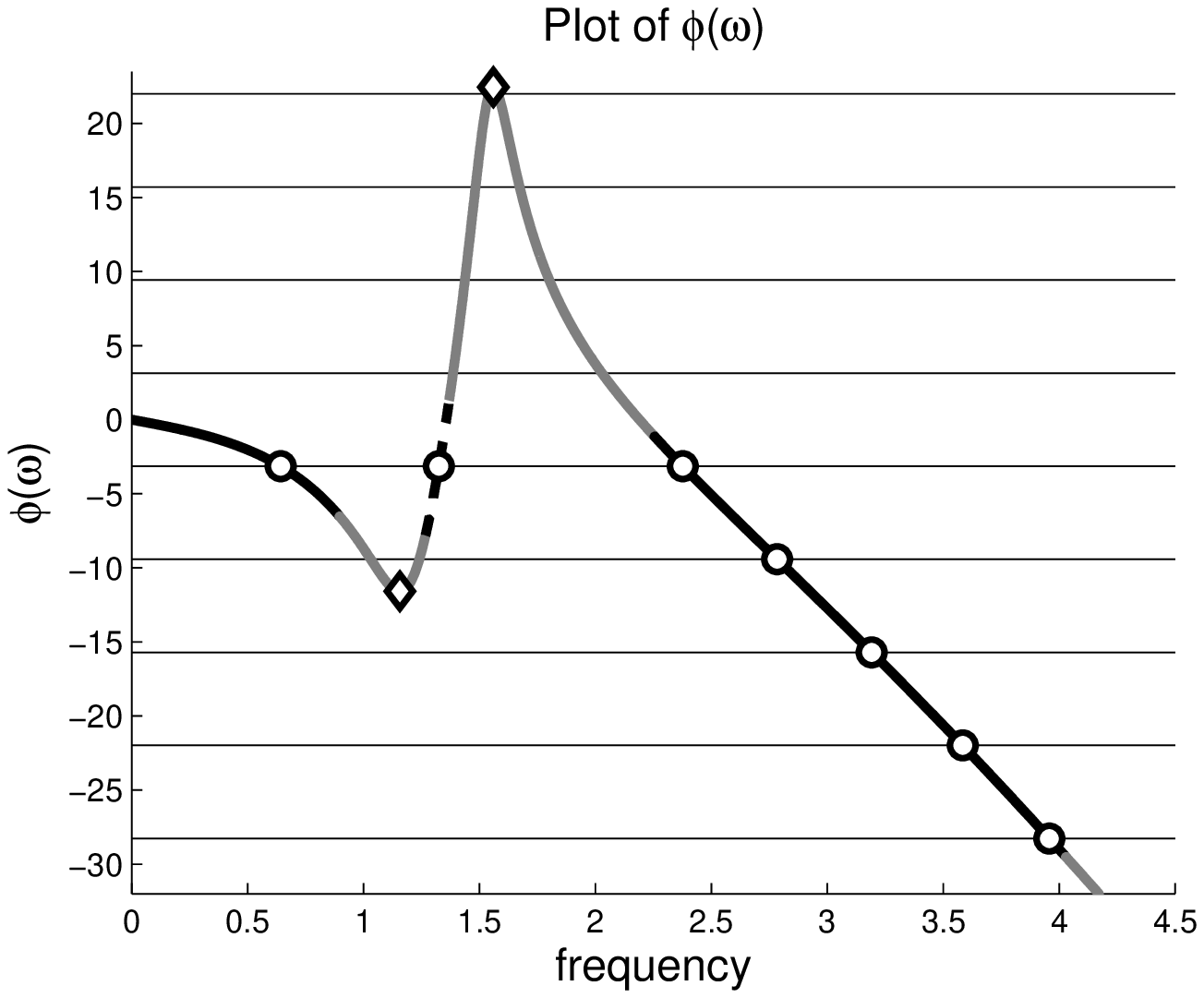}
      \caption{$\phi(\w)$ plot and the frequency intervals with invariant crossing direction}
      \label{fig:wc}
   \end{minipage}
\end{figure}

Based on the computation of critical delays in an increasing order for \emph{one} interval $I_i$, we can compute critical delays in an increasing order for all intervals in $I=\cup_{i=1}^{n_I} I_i$. This is done as follows. First we compute the smallest critical delay $h^i$ for each interval $I_i$ as explained above. The minimum of these delays $h^{i_{\min}}$ is the first critical delay of all intervals. Then at every iteration we compute the boundary crossing root with the next smallest delay \emph{only} in the interval $I_{i_{\min}}$ and update its delay $h^{i_{\min}}$. We find the next critical delay which is the minimum of all delays and repeat the iteration. If there is no boundary crossing root on an interval, we discard that interval.

The following algorithm computes critical delays in an increasing order for all intervals \mbox{$I=\cup_{i=1}^{n_I} I_i$}. Using boundary characteristic roots, it calculates the number of characteristic roots inside $\Cs$ between consecutive critical delays.

\begin{algorithm} \label{alg:critdelays}
\begin{enumerate}
\item Compute ${\bf n^1}$, the initial number of characteristic roots of $G_{cl}$ inside $\Cs$ (the zeros of $G_{cl}(0)$ inside $\Cs$) and set the first critical delay to zero, ${\bf h_c^1=0}$.
\item For each interval $I_i$ in $I$,
\begin{itemize}
\item calculate the horizontal line $\phi_h^i$ depending on the sign of $\Phi_i H_i$,
\item if the horizontal line $\phi_h^i$ is outside of the range $[\phi_{\min}^i,\phi_{\max}^i]$, discard this interval,
\item compute the intersection point $\w^i$ of $\phi_h^i$ and $\phi(\w)$ by a bisection algorithm,
\item calculate the smallest critical delay of \emph{this} interval, $h^i=H(\w^i)$.
\end{itemize}
\item Repeat for ${\bf k=1,2,\ldots}$
\begin{enumerate}
\item Find the minimum delay $h^{i_{\min}}$ of delays $h^i$ for remaining intervals. The next critical delay is ${\bf h_c^{k+1}}=h^{i_{\min}}$ and the number of characteristic roots inside $\Cs$ for delay interval $[h_c^{k},h_c^{k+1}]$ is $n^k$.
\item Update the number of characteristic roots inside $\Cs$ for the next delay interval,
\bd
n^{k+1}=n^{k}+\left\{
      \begin{array}{rr}
        2 \\
        1 \\
      \end{array} \times \mathcal{CD}_i\ \rm{for}\
      \begin{array}{r}
        w^{i_{\min}}\neq0; \\
        w^{i_{\min}}=0; \\
      \end{array}
    \right.
\ed
where the formula counts the real or complex boundary crossing roots according to their crossing direction, $\mathcal{CD}_i=-\Phi_{i_{\min}}$ at the interval $I_{i_{\min}}$.
\item Apply Step $2$ \emph{only} for the interval $I_{i_{\min}}$. Update $h^i$ and $\w^i$ by computing the next smallest critical delay and the imaginary part of its characteristic root for \emph{this} interval or discard the interval if $\phi_h^{i_{\min}}$ is outside of the range $[\phi_{\min}^{i_{\min}},\phi_{\max}^{i_{\min}}]$.
\end{enumerate}
\end{enumerate}
\end{algorithm}

There are infinitely many critical delays crossing the relative stability boundary \mbox{$\Re(s)=\so<0$}, \cite{WimBook}. Our numerical methods for the considered problems in Section~\ref{sec:problem} are based on Algorithm~\ref{alg:critdelays} with different stopping criteria which are described in the following section.

\subsection{Relative Stability Analysis of the Closed-Loop SISO Dead-Time Systems} \label{sec:stabI}

The numerical method for the first formulated problem in Section~\ref{sec:problem} computes delay intervals and the number of characteristic roots of $G_{cl}$ inside $\Cs$ at each delay interval for $h\in[0,h_{\max}]$. In Algorithm~\ref{alg:critdelays} we continue to compute critical delays in an increasing order until \emph{the critical delay is larger than $h_{\max}$}. Delay intervals are the intervals between critical delays and the end point of the last delay interval is set to $h_{\max}$. The algorithm also provides the number of characteristic roots inside $\Cs$ for each delay interval.

The second problem is the computation of the delay intervals where $G_{cl}$ is stable on the region $\Re(s)<\so$ for $h\in[0,\infty)$. In Algorithm~\ref{alg:critdelays} we stop when \emph{the number of characteristic roots inside $\Cs$ is larger than the number of characteristic roots leaving $\Cs$ after the last computed critical delay}. When this condition holds, there are always characteristic roots inside $\Cs$ and $G_{cl}$ is unstable for all delays larger than the last critical delay. The number of characteristic roots leaving $\Cs$ for all delays is computed in the following proposition.
\begin{proposition} \label{prop:leaving}
Given the relative stability boundary $\Re(s)=\so<0$, the number of characteristic roots leaving $\Cs$ for $h\in[0,\infty)$ is finite and equal to the summation of terms
\be
\left\lfloor \frac{\phi_{\max}^i+\phi_0}{2\pi}-\frac{1}{2} \right\rfloor - \left\lceil \frac{\phi_{\min}^i+\phi_0}{2\pi}-\frac{1}{2} \right\rceil +1
\ee for all intervals $I_i$ with crossing direction $-1$.
\end{proposition}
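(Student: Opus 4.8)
The plan is to reduce the count of departing roots to a count of intersection points of the curve $\w\mapsto\phi(\w)$ with a discrete family of horizontal lines, restricted to the feasible intervals whose crossing direction is $-1$. First I would invoke the analysis of Section~\ref{sec:critdelays}: a point $\so+j\w$ with $\w\in I_i$ is a boundary crossing root of $G_{cl}$ exactly when the phase equation (\ref{eq:phiequation}) holds, i.e. when $\phi(\w)$ meets one of the levels $(2l+1)\pi-\phi_0$, $l\in\Z$ (the magnitude condition $H(\w)\ge0$ being automatic since $I_i\subseteq I_h$), and the associated critical delay $h=H(\w)\ge0$ is then uniquely determined. By Theorem~\ref{thm:RT} every such root on an interval with $\mathcal{CD}_i=-1$ leaves $\Cs$ as $h$ increases through $H(\w)$; and, since $I=\bigcup_i I_i$ exhausts all feasible frequencies while every departing root is a boundary crossing root, conversely every crossing at which a root leaves $\Cs$ for some $h\in[0,\infty)$ is obtained this way, at a frequency $\w\ge0$ lying in a single interval $I_i$ with $\mathcal{CD}_i=-1$. (A crossing at $\w>0$ carries its complex conjugate and removes a pair of roots, whereas $\w=0$ removes a single real root; this is the bookkeeping already built into Algorithm~\ref{alg:critdelays}, and the stated formula counts precisely the crossing frequencies.) Hence the number of departing roots is governed by the number of solutions of (\ref{eq:phiequation}) in $\bigcup\{I_i:\mathcal{CD}_i=-1\}$.

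Second, I would count these solutions interval by interval. On each $I_i$ the function $\phi$ is continuous and strictly monotone — by Lemma~\ref{lem:propPhi} together with the fact that $I=I_h\cap I_d$ refines the partition induced by the zeros of $\phi'$ — so $\phi$ maps $I_i$ bijectively onto the closed interval with endpoints $\phi_{\min}^i$ and $\phi_{\max}^i$. Therefore the number of $\w\in I_i$ solving (\ref{eq:phiequation}) equals the number of integers $l$ with $\phi_{\min}^i\le(2l+1)\pi-\phi_0\le\phi_{\max}^i$; solving for $l$ yields $\frac{\phi_{\min}^i+\phi_0}{2\pi}-\frac12\le l\le\frac{\phi_{\max}^i+\phi_0}{2\pi}-\frac12$, and the number of integers in a closed real interval $[a,b]$ is $\lfloor b\rfloor-\lceil a\rceil+1$ (read as $0$ whenever it comes out negative, i.e. when no level lies in the range). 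Summing over all $I_i$ with $\mathcal{CD}_i=-1$ reproduces exactly the claimed expression. Non-generic coincidences — a level passing through a shared endpoint of two adjacent $\mathcal{CD}=-1$ intervals — are removed by the same infinitesimal perturbation of $\so$ used elsewhere in the paper, or handled by a half-open convention on the $I_i$.

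Finally, for finiteness I would observe that there are only finitely many intervals $I_i$ (their endpoints lie among the finitely many non-negative real roots of the polynomials (\ref{eq:polyHp}) and (\ref{eq:p2}), together with $0$ and $\infty$), so every bounded $I_i$ contributes a finite term; it then remains to show that no unbounded feasible interval has crossing direction $-1$, which is the crux. On an unbounded feasible interval $[\w_l,\infty)$ one has $H(\w)\ge0$ for all $\w\ge\w_l$. In (\ref{eq:phip}) the arctangent-derivative sums are $O(\w^{-2})$ and $-H(\w)-\w H'(\w)=-\frac{d}{d\w}\big(\w H(\w)\big)$; since $H(\w)=\frac{\ln|k_G|}{\so}+\frac{m-n}{\so}\ln\w+O(\w^{-1})$, one gets $\phi'(\w)\to-\infty$, hence $\so\phi'(\w)\to+\infty$, when $G$ is strictly proper, and $\so\phi'(\w)\to-\ln|k_G|$ when $G$ is biproper, which is positive exactly when $|k_G|<1$ — and $|k_G|\le1$ is forced by feasibility of the interval near infinity. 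In all these cases $\mathcal{CD}_i=+1$ by Theorem~\ref{thm:RT}, so every $\mathcal{CD}=-1$ interval is bounded and the sum is finite. I expect the main obstacle to be making this last step airtight in the borderline biproper case $|k_G|=1$: there $\so\phi'(\w)\to0$ and one must carry the expansions of $H$ and $\phi'$ to the next order — using that the zeros and poles of $G$ come in complex-conjugate pairs, so $\sum\w_{zk}=\sum\w_{pi}=0$ — to confirm that feasibility of an unbounded interval still forces $\mathcal{CD}=+1$ on it.
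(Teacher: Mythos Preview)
Your proposal is correct and follows essentially the same approach as the paper: reduce the count to intersections of $\phi(\w)$ with the discrete levels $(2l+1)\pi-\phi_0$ on the $\mathcal{CD}=-1$ intervals, then argue finiteness by showing the only potentially unbounded interval in $I$ carries $\mathcal{CD}=+1$ via the limit $\lim_{\w\to\infty}\so\phi'(\w)=\ln(1/|d|)\ge0$ for $d=G(\infty)$. Your derivation of the integer-counting formula and your treatment of the asymptotics are in fact more explicit than the paper's, and you rightly flag the borderline biproper case $|d|=1$ as needing a next-order expansion; the paper simply asserts $\so\phi'(\w)\to\ln(1/|d|)\ge0$ and declares the crossing direction ``not $-1$'' without addressing the sign when the limit is zero.
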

\begin{proof}
The number of characteristic roots leaving $\Cs$ for $h\in[0,\infty)$ is equal to the number of intersections of horizontal lines on the left-hand side of equation (\ref{eq:phiequation}) and the function $\phi(\w)$ over the intervals $I_i$ with the crossing direction $-1$. Since $\phi(\w)$ is a continuous function, it is sufficient to show that all intervals with the crossing direction $-1$ are finitely many and have a finite length.

By Corollary~\ref{cor:extremumpts}, the intervals whose boundary points are the non-negative real zeros of $H'(\w)$, $0$ and $\infty$ contain the intervals in $I$. As a result, the number of intervals in $I$ is finite and all intervals have finite length except the last one. Therefore it is enough to show that either the last interval $I_{n_I}$ has a finite length or its crossing direction is not $-1$. When $|d|\leq1$ where $d:=G(\infty)$, the last interval $I_{n_I}$ has infinite length, $\lim_{\w\rightarrow\infty}H(\w)=\frac{\ln|d|}{\so}\geq0$. However, the crossing direction of this interval is not $-1$ since \mbox{$\lim_{\w\rightarrow\infty} \so\phi'(\w)= \ln\frac{1}{|d|}\geq0$}. When $|d|>1$, the last interval $I_i$ has a finite length since $\lim_{\w\rightarrow\infty}H(\w)=\frac{\ln|d|}{\so}<0$ does not satisfy (\ref{eq:hw}). The result follows. The formula in the proposition sums the number of horizontal lines intersecting $\phi(\w)$ over the intervals in $I$ with crossing direction $-1$.
\end{proof}

The numerical method for the second problem first computes ${\bf n_s}$, the number of characteristic roots leaving $\Cs$ for all delays by Proposition~\ref{prop:leaving}. In step $3-b)$ of Algorithm~\ref{alg:critdelays}, $n_s$ is updated if the computed boundary crossing root leaves $\Cs$, i.e., its crossing direction is $-1$. We stop the algorithm when \emph{$n^k$ is larger than $n_s$}. The closed-loop system $G_{c}$ is stable where delay intervals have no characteristic roots inside $\Cs$, i.e., $n^k=0$.

\begin{remark} When the plant $G$ is bi-proper ($d\neq0$) and $|d| \geq 1$, the closed-loop system $G_{cl}$ has infinitely many characteristic roots inside $\Cs$ for all delays.\end{remark}
\begin{remark}
When the plant $G$ is bi-proper and $|d|<1$ , both methods compute critical delays up to $\frac{\ln |d|}{\so}$. For greater delays, the closed-loop system $G_{cl}$ is unstable and has infinitely many characteristic roots inside $\Cs$. \end{remark}
\begin{remark}
 Our numerical method is applicable when the stability boundary is chosen as the imaginary axis, i.e., $\so=0$. Since the stability analysis of time-delay systems for this case is well-studied, \cite{WimBook}, we give few remarks on the implementation of the numerical method. In this case, feasible intervals reduce into \emph{feasible frequency points} including \emph{critical frequencies} satisfying
\be \label{eq:imcase}
H_{im}(\w):=\ln|G(j\w)|=0.
\ee
These points are calculated using a bisection algorithm and the nonnegative real zeros of $H_{im}'(\w)$ similar to the nonimaginary axis case. Then the critical frequencies are the points satisfying (\ref{eq:imcase}) and the characteristic equation (\ref{eq:complex}). The corresponding delay values for a critical frequency $\w_0$ are $H_{im}(\w_0)+\frac{2k\pi}{\w_0}, k\in\Z$ and its crossing direction is equal to
\bd
\mathcal{CD}(\so+j\w_0,H_{im}(\w_0))=\textrm{sign}\left(-H_{im}'(\w_0)\right)
\ed which is the special case of Theorem \ref{thm:RT} for the imaginary axis case. The stability of the closed-loop SISO dead-time system can be determined based on the computed critical frequencies and using the periodicity of the corresponding delays.
\end{remark}

\section{Examples} \label{sec:examples}
The characteristic roots of $G_{cl}$ given $G$ (\ref{eq:ex1}) cross the relative stability boundary $\Re(s)=\sigma_0=-0.1$ at seven critical delays for $h\in[0,7]$ as shown circle-shaped dots in Figure~\ref{fig:wc}. Delay intervals, the number of characteristic roots inside $\Cs$ in these intervals and the boundary crossing roots are given in Table~\ref{table:fourhc}.

\begin{table}[h]
\tbl{Delay intervals, the number of characteristic roots inside $\Cs$ for each delay interval
and boundary crossing roots for \mbox{$h\in[0,7]$} and \hbox{$\Re(s)=-0.1$}.}
{\begin{tabular}{rcc}\toprule
  Delay  & \# of Characteristic  & Boundary Crossing \\
  Intervals & Roots inside $\Cs$ & Roots \\
\colrule
  \rule{0pt}{3ex}
  $[0,0.879)$ & $0$ & $-$\vspace{1mm} \\
  $(0.879,2.984)$ & $2$ &  $-0.1+2.377j$\vspace{1mm}\\
  $(2.984,3.280)$ & $4$ &  $-0.1+2.784j$\vspace{1mm}\\
  $(3.280,4.488)$ & $2$ &  $-0.1+1.325j$\vspace{1mm}\\
  $(4.488,4.556)$ & $4$ &  $-0.1+0.642j$\vspace{1mm}\\
  $(4.556,5.800)$ & $6$ &  $-0.1+3.192j$\vspace{1mm}\\
  $(5.800,6.831)$ & $8$ &  $-0.1+3.584j$\vspace{1mm}\\
  $(6.831,7]$  & $10$ &  $-0.1+3.958j$\vspace{1mm}\\
\botrule
\end{tabular}}  \label{table:fourhc}
\end{table}
In the second example, $G$ is a high-order approximation of the one dimensional heat diffusion equation, \cite{CurtainAUT2009}
\bd
G(s)=\prod_{n=1}^{100} \frac{1+\frac{s}{n^2\pi^2}}{1+\frac{s}{(n-1/2)^2\pi^2}}
\ed The relative stability analysis of the closed-loop of this system with an input delay for various $\so$ is given in Table \ref{table:highorderex}. The results are obtained within $5$ sec. for each  $\so$ on a PC with an Intel Core Duo 2.53 GHz processor with 2 GB RAM.

\begin{table}[h]
\tbl{Delay intervals where $G_{cl}$ is stable.}
{\begin{tabular}{cccc} \toprule
  $\sigma_0$ & $-0.1$ & $-0.5$ & $-1$\\
  stable delay intervals & $[0,1.575)$ & $[0,0.770)$ & $[0,0.551)$ \\
\botrule
\end{tabular} \label{table:highorderex}}
\end{table}

\begin{table}[h]
\tbl{Delay intervals where $G_{cl}$ is stable.}
{\begin{tabular}{cccccc} \toprule
  Reference & $G(s)$ &  & Delay Intervals &  & \\
  \toprule
  \cite{Thowsen1981} &  $\frac{1}{s^3+s^2+2s+1}$ & $\so=0$ & $\so=-0.01$ & $\so=-0.02$ & $\so=-0.03$ \\
   &   &
   $\begin{array}{c}
     (\frac{\pi}{2},\sqrt{2}\pi) \\
     (\frac{5\pi}{2},2\sqrt{2}\pi)
   \end{array}$
   & $(1.714,4.267)$ & $(1.878,4.125)$ & $(2.098,3.894)$ \\
  \colrule
  \cite{ChenSCL1995} & $\frac{s}{s^2+s+1}$ & $\so=0$ & $\so=-0.01$ & $\so=-0.1$ & $\so=-0.5$ \\
   & & $\begin{array}{c}
         [0,\infty)\ \textrm{except}\\
         (2k+1)\pi,\ k\in\Z
       \end{array}$  & $\begin{array}{c}
                          [0,2.467) \\
                          (4.209,7.261)
                        \end{array}$ & $[0,1.612)$ & $[0,0.811)$ \\
  \colrule
  \cite{Louisell2001} & $\frac{-(s+2)}{s^2+s+4}$ & $\so=0$ & $\so=-0.01$ & $\so=-0.1$ & $\so=-0.5$ \\
   & & $\begin{array}{c}
         [0,2.006)\\
         (4.443,4.571)
       \end{array}$  & $(0.010,1.971)$ & $(0.105,1.745)$ & $(0.573,1.311)$ \\
  \colrule
  \cite{HanYuGu2004} & $\begin{array}{c}
         \frac{b-cs}{s}\ \textrm{for}\\
         (b,c)=(0.1,3)
       \end{array}$ $$  & $\so=0$ & $\so=-0.01$ & $\so=-0.1$ & $\so=-1$ \\
   & & $\begin{array}{c}
         [0,0.488)
       \end{array}$  & $[0,0.484)$ & $[0,0.453)$ & $[0,0.294)$ \\
  \colrule
  \cite{HuLiu2007} & $\frac{-0.2s+1}{s}$ & $\so=0$ & $\so=-0.01$ & $\so=-0.5$ & $\so=-1$ \\
   & & $[0,1.342)$  & $[0,1.309)$ & $[0,0.655)$ & $[0,0.452)$ \\
\botrule
\end{tabular} \label{table:variousexamples}}
\end{table}

We collected various examples from the literature and give our results in Table \ref{table:variousexamples}. The first three columns show the references, plants and delay intervals where the closed-loop system $G_{cl}$ is stable given in the corresponding papers. The next three columns are delay intervals where the closed-loop system is relative stable for various stability boundaries different than the imaginary axis. Note that in practice, the relative stability is required for the robustness of the overall system and delay intervals with respect to the imaginary axis and the relative stability boundary are considerably different even if they are arbitrarily close, i.e., $\so=-0.01$.

For a relative stable closed-loop system, the real part of the dominant pole $\sigma$ determines the rate of convergence of system states and it is proportional to $e^{\sigma t}$. The settling time of system states corresponding for $\pm0.2\%$ tolerance is approximated as $\frac{4}{|\sigma|}$, \cite{OgataBook}. When the closed-loop of a SISO dead-time system is relative stable with respect to the boundary $\Re(s)=\so$ for a delay interval, all poles of characteristic roots have real parts less than $\so$ for all delays in this interval. Therefore the upper bound of the settling time for all delays in the delay interval is $\frac{4}{|\so|}$.

\begin{figure}[!h]
   \begin{minipage}[t]{0.45\textwidth}
      \vspace{0pt}
      \includegraphics[width=\linewidth]{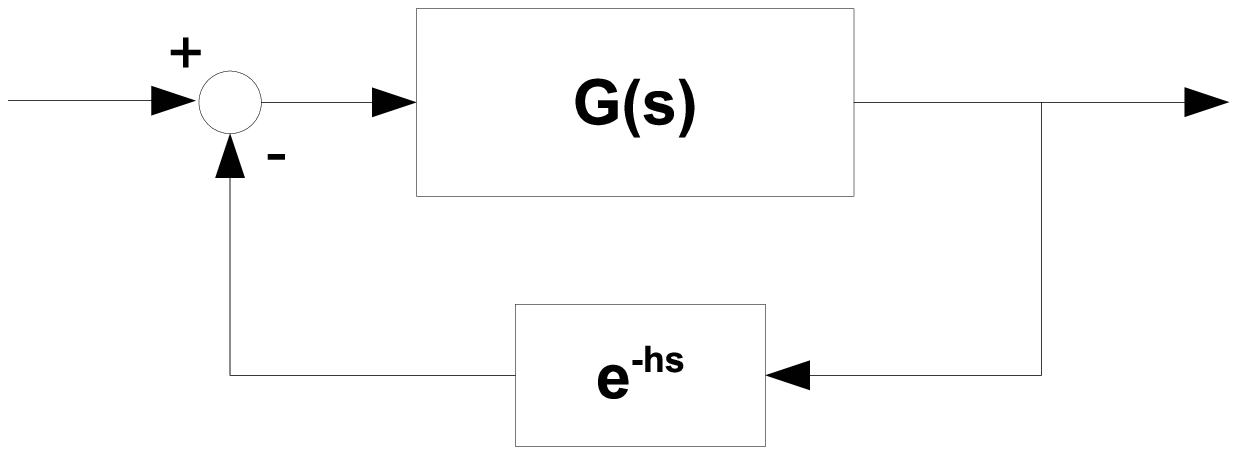}
      \caption{$The feedback configuration$}
      \label{fig:feedback}
   \end{minipage}
   \hfill
   \begin{minipage}[t]{0.45\textwidth}
      \vspace{0pt}\raggedright
      \includegraphics[width=\linewidth]{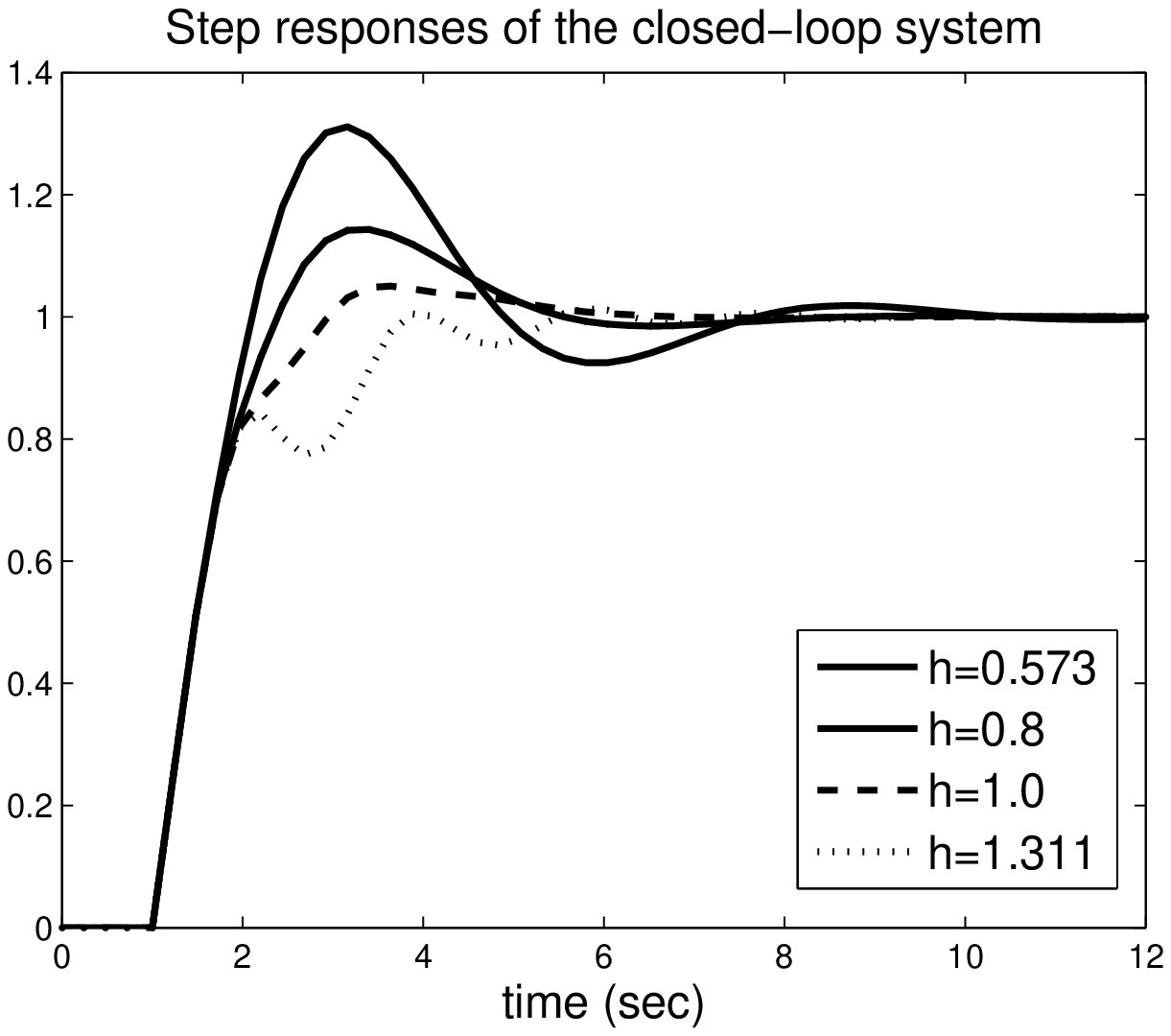}
      \caption{Step responses of the closed-loop system for $h=0.573,\ 0.8,\ 1.0,\ 1.311$.}
      \label{fig:stepresponses}
   \end{minipage}
\end{figure}

We consider the example in \cite{Louisell2001} for the feedback configuration in Figure \ref{fig:feedback} and the relative stability boundary is $\so=-0.5$. The characteristic roots of the closed-loop system $G_{cl}$ is relative stable for $h\in(0.573,1.311)$. The responses to a negative step input of the feedback system for $h=0.573,\ 0.8,\ 1.0,\ 1.311$ are given Figure \ref{fig:stepresponses}. Note that all responses have settling time less than $\frac{4}{0.5}=8$ seconds as expected. We also verified that settling times are larger outside of this delay interval.

\section{Concluding Remarks} \label{sec:concl}
We present a method to compute the critical delays in an increasing order on a given vertical line stability boundary different than the imaginary axis for the closed-loop SISO dead-time systems. Based on this method, we calculate the characteristic roots crossing the relative stability boundary and their critical delays. We analyze the relative stability of the closed-loop SISO dead-time system for a given time-delay interval or for all delays.

Our approach for the relative stability analysis of the closed-loop SISO dead-time systems can be extended to piece-wise linear stability boundaries. This allows analyzing the stability of such systems on complicated stability regions known as D-stability regions in the literature, \cite{Mao2006}. Available methods give sufficient conditions based on Lyapunov techniques and our future research direction is to analyze the D-stability of time-delay systems directly.

\section*{Acknowledgement(s)}
This article present results of the Belgian
Programme on Interuniversity Poles of Attraction, initiated by the
Belgian State, Prime Minister's Office for Science,
Technology and Culture, the Optimization in
Engineering Centre OPTEC of the K.U.Leuven, and the project STRT1-09/33
of the K.U.Leuven Research Foundation.

\bibliographystyle{tCON}
\bibliography{main}
\end{document}